\theoremstyle{plain}
\newtheorem{theorem}{Theorem}[section]
\newtheorem{lemma}[theorem]{Lemma}
\newtheorem{corollary}[theorem]{Corollary}
\theoremstyle{definition}
\newtheorem{definition}[theorem]{Definition}
\title{Consensus in Equilibrium: Can One Against All Decide Fairly?}
\theoremstyle{plain}
\theoremstyle{definition}
\setlist[enumerate]{itemsep=0mm,topsep=1.2mm}
\setlist[itemize]{itemsep=0mm,topsep=1.2mm}
\newcommand{\EDP}{RIS} %Equivalent distributed problem
\begin{document}
	
\begin{titlepage}
	\title{Consensus in Equilibrium: \\
		Can One Against All Decide Fairly?
		{\let\thefootnote\relax\footnote{
			This work is partially supported by a grant from the Blavatnik Cyber Security Council and
		the Blavatnik Family Computer Science Research Fund.}}
	}
	
	\author{Yehuda Afek}
	\author{Itay Harel}
	\author{Amit Jacob-Fanani}
	\author{Moshe Sulamy}
	\affil{Tel-Aviv University}
	\date{}
	\maketitle{}

	\begin{abstract}
		
Is there an equilibrium for distributed consensus when all agents except one collude to steer the decision value towards their preference?
If an equilibrium exists, then an $n-1$ size coalition cannot do better by deviating from the algorithm,
even if it prefers a different decision value.
We show that an equilibrium exists under this condition only if the number of agents in the network is odd
and the decision is binary (among two possible input values).
That is, in this framework we provide a separation between binary and multi-valued consensus.
Moreover, the input and output distribution must be uniform,
regardless of the communication model (synchronous or asynchronous).
Furthermore, we define a new problem - Resilient Input Sharing (RIS),
and use it to find an {\em iff} condition
for the $(n-1)$-resilient equilibrium for deterministic binary consensus,
essentially showing that an equilibrium for deterministic consensus
is equivalent to each agent learning all the other inputs in some strong sense.
Finally, we note that $(n-2)$-resilient equilibrium for binary consensus is possible for any $n$.
The case of $(n-2)$-resilient equilibrium for \emph{multi-valued} consensus is left open.

	\end{abstract}

\end{titlepage}
\pagenumbering{arabic}

\section{Introduction}
\label{section_intro}

In recent years, there is a growing interest in distributed algorithms for networks of rational agents
that may deviate from the prescribed algorithm in order to increase their profit
\cite{DisMeetsGame,DistRobust,LeaderADH,BARFault,RationalSecret}. 
For example, an agent may have a higher profit if zero is decided in a consensus algorithm,
or an agent may prefer to be (or not to be) the elected leader in a leader election algorithm. 
The goal is to design distributed algorithms that reach \emph{equilibrium}, that is,
where no agent can profit by cheating.

In this paper we study the consensus problem in a network of rational agents,
in which each agent has a preferred decision value. We consider $(n-1)$-resilient equilibrium, that is,
an equilibrium that is resilient to any coalition of up to $n-1$ agents
that may collude in order to increase their expected profit (utility).
This problem was proposed in \cite{LeaderADH} and studied also in \cite{GTBlocks},
where the authors suggest an $(n-1)$-resilient equilibrium for binary consensus
in a synchronous ring.

We prove that in any $(n-1)$-resilient equilibrium for binary consensus,
the output of the agents must be the XOR of the inputs of all agents.
Thus, due to validity, there is \emph{no} $(n-1)$-resilient equilibrium for binary consensus
in {\em even} sized networks,
and the algorithm in \cite{GTBlocks} works well only for odd sized networks.
Still, we show that the algorithm in \cite{GTBlocks} reaches $(n-2)$-resilient equilibrium
for binary consensus with uniform input distribution, for \emph{any} $n$.

We further show that multi-valued consensus is impossible,
i.e., there is no $(n-1)$-resilient equilibrium for multi-valued consensus
for $r>2$ where $r$ is the number of possible values,
thus surprisingly there is a computational gap between binary and multi-valued consensus in this model. Note that it was previously shown that in this game theoretic model,
leader election is also not equivalent to consensus \cite{GTBlocks}.
 
Furthermore, we show that in this model, {\em deterministic} binary consensus
is equivalent to resilient input sharing (RIS), a natural problem in distributed computing
in which each agent $i$ shares its input with
all other agents in the network
(a variant of the knowledge sharing problem defined in \cite{GTBlocks}). 
That is, in any odd sized network with uniform input distribution,
any algorithm for RIS can be transformed into
a $(n-1)$-resilient equilibrium for deterministic binary consensus and vice versa.
Thus, providing a sufficient and necessary condition
for $(n-1)$-resilient equilibrium for deterministic binary consensus.

%%%%%%%%%%%%%%%%%%%%%%%%%%%%%%%%%%%%%%%%%%%%%%

\subsection{Our Contributions}
\label{sub_contrib}
%As opposed to the secret sharing problem studied by Abraham et al.\cite{DistRobust},
%we assume that the goal of the coalition is to alter the result of the consensus,
%not to learn the secret faster than agents outside of the coalition.
are as follows:
\begin{description}
	\item[$(\S \ref{section_xor})$] Any $(n-1)$-resilient equilibrium for binary consensus
	decides on the XOR of all input values.
	
	\item[$(\S \ref{section_xor_lemmas})$] In any $(n-1)$-resilient equilibrium for binary consensus
	the input and output distributions are uniform.
	
	\item[$(\S \ref{section_cover_bin_cons})$] The protocol suggested in \cite{GTBlocks}
	reaches $(n-2)$-resilient equilibrium for binary consensus with uniform input distribution,
	for any $n$.
	
	\item[$(\S \ref{section_multi})$]
	There is no $(n-1)$-resilient equilibrium for multi-valued consensus for $r>2$ possible inputs.
	
	\item[$(\S \ref{section_sufficient})$] \emph{Deterministic} $(n-1)$-resilient equilibrium for
binary consensus in a network exists \emph{iff}:
\begin{enumerate}
	\item The network size is odd.
	\item The input distribution is uniform.
	\item An equilibrium for Resilient Input Sharing (RIS) is possible in the network topology.
\end{enumerate}

%	\item From the proof that the ability to perform consensus implies ability to perform \EDP{}, we conclude that in order to implement consensus, the network unintentionally implements knowledge sharing 

%	\item There is no $n-1$-strong equilibrium for consensus
%	if the input distribution is non-uniform,
%	i.e., each agent must have a probability of $\frac{1}{2}$ to get $0$ as input
%	and a probability of $\frac{1}{2}$ to get $1$ as input.

%	\item We define a new distributed problem – the \emph{Resilient Knowledge Sharing (RKS)} problem.
%	In this problem, the processors need to perform a regular knowledge sharing,
%	where each processor $i$ has some networkinformation it needs t-K-o share will all other processors;
%	In RKS, the sharing is done under a new condition -– an agent $i$ must not receive messages
%%	from an agent $j$ at time $t$, if by time $t$, agent $j$ could have learned $i$'s information. 

\end{description}
The model, notations and some definitions are given in Section \ref{section_model}, and we discuss our results and further thoughts in Section \ref{section_discussion}.

\subsection{Related Work}

The secret sharing problem \cite{HowToShare} initiated the connection between
distributed computing and game theory.
Further works in this line of research considered multiparty communication
with Byzantine and rational agents
\cite{DisMeetsGame,ScalableRational,EfficientRational,ByzantineWithRational,RationalityAndAdversarial}.

In \cite{LeaderADH}, the first distributed protocols for a network of rational agents
are presented, specifically protocols for \emph{fair} leader election.
In \cite{GTBlocks}, the authors continue this line of research by providing
basic building blocks for game theoretic distributed algorithms,
namely a wake-up and knowledge sharing building blocks that are in equilibrium,
and equilibria for consensus, renaming, and leader election are presented using these building blocks.
The consensus algorithm in \cite{GTBlocks} claims to reach $(n-1)$-resilient equilibrium in
a ring or complete network, using the knowledge sharing building block to share the input
of all processors in the network, and outputting the XOR of all inputs.
Consensus was further researched in \cite{RationalConsensus}, where the authors show that
there is no ex-post Nash equilibrium for rational consensus,
and present a Nash equilibrium that tolerates $f$ failures under some minimal assumptions
on the failure pattern.
Equilibrium for fair leader election and fair coin toss are also presented and discussed in \cite{YifrachLeader},
where it is shown to be resilient only to coalitions of sub-linear size,
and a modification to the leader election protocol from \cite{LeaderADH,GTBlocks}
that is resilient to every coalition of size $\Theta(\sqrt{n})$ is proposed.

In \cite{GTColor}, the authors examine the impact of a-priori knowledge of the network size
on the equilibrium of distributed algorithms, assuming the $id$ space is unlimited
and thus vulnerable to a Sybil attack \cite{SybilAttack}.
In \cite{GTidspace} the authors remove this assumption and assume the $id$ space is bounded,
examining the relation between the size of the $id$ space and the number of agents in the network
in which an equilibrium is possible.

\section{Model}
\label{section_model}

We use the standard message-passing model,
where the network is a bidirectional graph $G=(V,E)$ with $|V|=n$ nodes,
each node representing a \emph{rational} agent, following the model in \cite{DistRobust,LeaderADH}.
We assume $n$ is a-priori known to all agents,
$G$ is $2$-vertex-connected,
and all agents start the protocol together, i.e., all agents wake-up at the same time.
We can use the Wake-Up \cite{GTBlocks} building block to relax this assumption.
In Sections~\ref{section_necessary} and~\ref{section_multi} the results apply
for both synchronous and asynchronous communication networks,
while Section~\ref{section_sufficient} assumes a synchronous network.

In the consensus problem, each agent $i$ has an id $id_i$ and an input $I_i \in \{0,...r-1\}$
and must output a decision $D_i \in \{0,...r-1, \bot\}$.
The $\bot$ output can be output by an agent to abort the protocol
when a deviation by another agent is detected.
A protocol achieves consensus if it satisfies the following \cite{Cons4}:
\begin{itemize}
	\item \textbf{Agreement}: All agents decide on the same value, $\forall{i,j}: D_i=D_j$.
	\item \textbf{Validity}: If $v$ was decided then it was the input of some agent,
		$\forall j \exists i: D_j = I_i$.
	\item \textbf{Termination}: Every agent eventually decides, $\forall{i}: D_i \neq \bot$.
\end{itemize}

\begin{definition}[Protocol Outcome]
		The outcome of the protocol is determined by the input and output of all agents.
		An outcome is \emph{legal} if it satisfies agreement, validity, and termination,
		otherwise the outcome is \emph{erroneous}.
\end{definition}

Considering individual rational agents, each agent $i$ has a utility function $U_i$ over the possible outcomes of the protocol.
The higher the value assigned by $U_i$ to an outcome,
the better this outcome is for $i$.
We assume the utility function $U_i$ of each agent $i$ satisfies \emph{Solution Preference} \cite{LeaderADH}:
%which guarantees that an agent never has an incentive to fail the algorithm.

\begin{definition}[Solution Preference]
	The utility function $U_i$ of any agent $i$ never assigns a higher utility to
	an erroneous outcome than to a legal one.
\end{definition}

Thus, the Solution Preference guarantees that an agent never has an incentive to sabotage the protocol,
that is, to prefer an outcome that falsifies either agreement or validity, or termination.
However, agents may take risks that might lead to erroneous outcomes
if these risks also lead to a legal outcome which increases their expected utility,
that is, if these risks increase the expected utility that the agent is expected to gain.

An intuitive example for a utility function of an agent $I$
with a preference towards a decision value of $1$ is:
$$
U_i = 
\begin{cases}
\text{1} &\quad \exists j: I_j=1 \land \forall k: D_k=1 \text{ ($1$ is decided by all agents)} \\
\text{0} &\quad $otherwise$ \text{ ($0$ is decided or erroneous outcome)} \\
\end{cases}
$$

All agents are given a protocol at the start of the execution,
but any agent may deviate and execute a different protocol if it increases its expected utility.
A protocol is said to \emph{reach equilibrium} if no agent can unilaterally increase its expected utility
by deviating from the protocol.

\begin{definition}[Nash Equilibrium\footnotemark]
	A protocol $\Phi$ is said to reach equilibrium if, for any agent $i$,
	there is no protocol $\Psi \neq \Phi$ that $i$ may execute and leads
	to a higher expected utility for $i$, assuming all other agents follow $\Phi$.
\end{definition}

\footnotetext{
		Previous works defined equilibrium over each step of the protocol.
		For convenience, this definition is slightly different,
		but it is easy to see that it is equivalent.}

\subsection{Coalitions}
We define a coalition of size $t$ as a set of $t$ rational agents that
cooperate to increase the utility of each agent in $t$.
A protocol that reaches $t$-resilient equilibrium \cite{LeaderADH} is resilient to coalitions of size up to $t$,
that is, no group of $t$ agents or less has an incentive to collude and deviate from the protocol.
We assume coalition members may agree on a deviation from the protocol in advance,
but can communicate only over the network links during the protocol execution.

\begin{definition}[$t$-resilient Equilibrium]
	A protocol $\Phi$ is said to reach $t$-resilient equilibrium if,
	for any group of agents $C \subset V$ s.t., $|C| \leq t$,
	there is no protocol $\Psi (\neq\Phi)$ that agents in $C$ may execute
	and which would lead to a higher expected utility for each agent in $C$,
	assuming all agents not in $C$ follow $\Phi$.
\end{definition}

The same intuitive example for a utility function above holds for a coalition,
in which the coalition has a preference towards a decision value $1$.

\subsection{Notations}
The following notations are used throughout this paper:
\begin{itemize}	
	\item $S_{-i}$ - all possible input vectors of agents in $V\setminus\{i\}$.
	\item $\#(b)$ - the number of agents in $V$ that receive $b$ as input.
	\item $\#_{-i}(b)$ - the number of agents in $V\setminus\{i\}$ that receive $b$	as input.	
%	\item XOR of binary set of values $X$ - $\bigoplus\limits_{v \in X} v$ = $(\sum\limits_{v \in X} v) \mod 2$.
	\item $I_i$ - the input of agent $i$.
	\item $D_i$ - the output value decided by agent $i$ at the end of the algorithm.
	\item $r$ - the number of possible input and output values. For binary consensus: $r=2$.
\end{itemize}

\section{Necessary Conditions for ($n-1$)-resilient Consensus}
\label{section_necessary}
% k >> l, m >> k, r >> m

\begin{theorem}
	\label{th_xor}
	The decision of any $(n-1)$-resilient equilibrium for binary consensus
	must be the XOR of all inputs, that is,
	$\forall{i}: D_i = \bigoplus\limits_{j \in V} I_j = \sum\limits_{j \in V} I_j \mod 2$
\end{theorem}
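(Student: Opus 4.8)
The plan is to show that the decision, regarded as a function $f$ of the input vector, must be \emph{sensitive to every coordinate at every point}: for every agent $i$ and every assignment $\vec{x}_{-i}\in\{0,1\}^{n-1}$ to the other agents, flipping $i$'s input flips the decision, i.e.\ $f(\vec{x}_{-i},0)\neq f(\vec{x}_{-i},1)$. Granting this, a standard discrete-derivative argument finishes the proof: if every single-coordinate difference of $f$ is identically $1$, then $g(\vec{x})=f(\vec{x})\oplus\bigoplus_{j}x_j$ has all discrete derivatives equal to $0$ and is therefore constant, so $f(\vec{x})=\bigoplus_j x_j\oplus f(\vec{0})$; validity applied to the all-zero input forces $f(\vec{0})=0$, whence $f=\bigoplus_j x_j$, which is the claim. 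I would first record that the decision is a well-defined function of the input vector (this follows from agreement together with the equilibrium hypothesis), or otherwise carry the decision \emph{distribution} through the same argument.

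The sensitivity claim itself I would prove by contradiction, using a single ``input-spoofing'' deviation. The key enabling observation is that the coalition $C=V\setminus\{i\}$ can run the prescribed protocol $\Phi$ while pretending its inputs are an arbitrary fabricated vector $\vec{y}_{-i}$: since private inputs are unobservable, the resulting joint execution is indistinguishable to the honest agent $i$ from an honest run on the input vector $(\vec{y}_{-i},x_i)$, so every agent (including $i$) decides the value $f(\vec{y}_{-i},x_i)$, and agreement and termination are preserved --- only validity is still judged against the \emph{actual} inputs $(\vec{x}_{-i},x_i)$. Now suppose, for contradiction, that $f(\vec{x}_{-i},0)=f(\vec{x}_{-i},1)=v$ for some $i$ and some $\vec{x}_{-i}$, and consider the coalition $C=V\setminus\{i\}$ with true inputs $\vec{x}_{-i}$, equipped with a utility (of the kind exhibited in Section~\ref{section_model}) that rewards it only when the protocol legally decides the opposite value $\bar v$. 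Under $\Phi$ this coalition obtains decision $v$ for \emph{both} values of $x_i$, hence its minimal utility. If instead it spoofs the all-$\bar v$ vector $\bar v^{\,n-1}$, then whenever $x_i=\bar v$ the decided value is $f(\bar v^{\,n-1},\bar v)=\bar v$ by validity of $\Phi$ on the all-$\bar v$ input, and this outcome is \emph{legal}, because the value $\bar v$ is present among the actual inputs --- it is $x_i$ itself. Since $x_i=\bar v$ occurs with positive probability, the deviation yields expected utility at least $\Pr[x_i=\bar v]>0$, strictly more than under $\Phi$, contradicting $(n-1)$-resilience. Note that the same deviation does \emph{not} break a genuinely XOR-deciding protocol: there no point is insensitive, so the honest utility is already positive and the comparison becomes an equality rather than a strict gain.

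The step I expect to be the main obstacle is making the spoofing/indistinguishability principle precise --- in particular arguing that the deviating coalition forces exactly the decision $f(\vec{y}_{-i},x_i)$ with agreement and termination intact, so that the only property which can fail is validity against the real inputs. This is where the structure of the model (private inputs, $2$-connectivity, and Solution Preference) is used, and it is also the point at which a randomized decision rule must be accommodated, by phrasing the argument in terms of the decision distribution and showing the spoofed distribution stochastically favours $\bar v$. The remaining ingredients are routine: full support of the input distribution (so that $\Pr[x_i=\bar v]>0$) is needed only to guarantee the positive-probability profitable event, and validity is invoked only at the all-$\bar v$ input. Finally I would remark that the companion consequence --- applying validity at the all-ones input to the derived identity $f=\bigoplus_j x_j$, which forces $n\bmod 2=1$ --- is exactly what rules out even-sized networks, though that lies outside the present statement.
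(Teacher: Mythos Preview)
Your sensitivity-plus-discrete-derivative argument is a clean alternative to the paper's induction on $\#(1)$, but there is a real gap: you have not handled randomised protocols. The claim that ``the decision is a well-defined function of the input vector (this follows from agreement together with the equilibrium hypothesis)'' is unjustified --- agreement forces all agents to output the \emph{same} value in each run, but nothing yet rules out a protocol whose decision on input $\vec{x}$ is $1$ with some probability $p(\vec{x})\in(0,1)$. Your contradiction needs the honest utility to be exactly $0$ (so that the spoofing gain $P[x_i=\bar v]$ is strictly larger), and this fails as soon as $P[D=\bar v\mid \vec{x}_{-i},x_i]>0$ for either bit. In the randomised setting your argument only shows that the decision is not \emph{deterministically} $v$ for both values of $x_i$, which is far weaker than the coordinate-wise sensitivity you need.

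The paper closes this gap by first proving, via essentially the same spoofing deviation applied symmetrically to both target values, that $P[D_i=v\mid s]=P[I_i=v]$ for every coalition input $s$ (Lemma~\ref{lemma_out_in}); this in turn forces the input distribution to be uniform (Theorem~\ref{th_uniform}) and hence $P[D_i=v\mid s]=\tfrac12$ (Theorem~\ref{th_uniform_out}). With uniform $x_i$ this reads $p(\vec{x}_{-i},0)+p(\vec{x}_{-i},1)=1$ for every $i$ and every $\vec{x}_{-i}$, and from \emph{that} point your discrete-derivative finish actually does go through verbatim at the level of $p$: two single-coordinate flips return $p$ to itself, so $p$ depends only on parity, and validity at $\vec 0$ pins $p$ to the XOR --- arguably tidier than the paper's induction. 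But you cannot skip the distributional groundwork that gets you there; note also that the full support of the input distribution you invoke is itself one of these consequences (Theorem~\ref{th_uniform}), not a standing assumption of the model.
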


Before we turn to the proof of Theorem~\ref{th_xor}
given in sections \ref{section_xor} and \ref{section_xor_lemmas},
note that according to this theorem,
if $n$ is even and all inputs are $1$ the decision must be $0$,
contradicting validity and leading to the following corollary:

\begin{corollary}
	\label{cor_noeven}
	There is no $(n-1)$-resilient equilibrium for binary consensus for even sized networks %when  $n$, the network size, is even
\end{corollary}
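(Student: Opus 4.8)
The plan is to argue by contradiction, leaning entirely on Theorem~\ref{th_xor}. Suppose, toward a contradiction, that $n$ is even and that some protocol $\Phi$ is an $(n-1)$-resilient equilibrium for binary consensus. Since $\Phi$ is such an equilibrium, Theorem~\ref{th_xor} applies and tells us that in every (legal) execution each agent decides $D_i = \bigoplus_{j \in V} I_j$.

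Next I would exhibit a single input assignment on which this decision rule collides with validity. Take the all-ones vector, $I_j = 1$ for every $j \in V$. Because $n$ is even, $\bigoplus_{j \in V} 1 = n \bmod 2 = 0$, so Theorem~\ref{th_xor} forces every agent to decide $D_i = 0$. On the other hand, validity demands that the decided value be the input of some agent, and here the only input present is $1$, so the decision must be $1$. These two conclusions are incompatible, which refutes the existence of $\Phi$. (Symmetrically one could pair the all-zeros vector with an odd-weight instance, but the all-ones case alone suffices.)

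The only point that needs a word of justification is that the all-ones assignment is a legitimate instance to test, i.e.\ that it actually occurs and is therefore bound by the equilibrium's guarantees. This is supplied by the uniform-distribution lemmas proved in Section~\ref{section_xor_lemmas} en route to Theorem~\ref{th_xor}: in any $(n-1)$-resilient equilibrium for binary consensus the input distribution is uniform, so every input vector, all-ones included, has positive probability. Consequently there is no genuine obstacle at this stage; all the difficulty resides in Theorem~\ref{th_xor}, and the corollary is its immediate arithmetic consequence once the parity of $n$ is plugged in.
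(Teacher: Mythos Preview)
Your proposal is correct and takes essentially the same approach as the paper: assume an $(n-1)$-resilient equilibrium exists for even $n$, invoke Theorem~\ref{th_xor} to force the decision to be the XOR of all inputs, and observe that on the all-ones input this yields decision $0$, violating validity. Your added remark that the uniform input distribution (Theorem~\ref{th_uniform}) guarantees the all-ones vector occurs with positive probability is a bit more careful than the paper's one-line justification, but the argument is otherwise identical.
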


\subsection{Output is the XOR of the Inputs}
\label{section_xor}
Here we prove Theorem~\ref{th_xor} based on the following two theorems, that are proved in Section~\ref{section_xor_lemmas}:

\begin{theorem}
	\label{th_uniform}
	If the distribution over the inputs is not uniform,
	there is no $(n-1)$-resilient equilibrium for consensus, i.e.:
	$\forall v_1,v_2: P[I_i=v_1]=P[I_i=v_2] = \frac{1}{r}$
\end{theorem}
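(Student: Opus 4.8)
The plan is to argue by contradiction. Suppose $\Phi$ is an $(n-1)$-resilient equilibrium and assume the input distribution is not uniform, so that $p_{v_1} := P[I_i=v_1] \neq P[I_i=v_2] =: p_{v_2}$ for two values. The whole argument rests on how powerful an all-but-one coalition is. Fix a lone agent $i$ and let $C = V\setminus\{i\}$, so $|C|=n-1$. Because $G$ is $2$-vertex-connected and $C$ contains every neighbour of $i$, the coalition controls every message $i$ receives. Hence, by internally executing the honest protocol on a \emph{fabricated} input vector $\vec a$ for its members while relaying to $i$ exactly the messages that this execution produces, the coalition forces the network to decide $D(\vec a, I_i)$, the honest decision on the configuration in which the agents of $C$ hold $\vec a$ and $i$ holds its true input $I_i$. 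Since $i$ is honest it never aborts (it sees a perfectly legal transcript), and by \textbf{Agreement} the coalition outputs the same value it computes. The only catch is that \textbf{Validity}, hence each agent's utility under \emph{Solution Preference}, is judged against the \emph{true} inputs, so a fabrication pays off only when $D(\vec a, I_i)$ equals a genuine input. Note that this deviation uses no timing assumptions, so the argument applies to synchronous and asynchronous networks alike.

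First I would nail down a clean consequence of this simulation. For a fixed value $v$, let the coalition prefer $v$ and fabricate the constant vector $\vec a = (v,\dots,v)$. When $I_i = v$ the entire configuration $i$ sees is constant, so \textbf{Validity} forces $D=v$, and this is genuinely valid since $I_i=v$; thus the coalition's expected utility is at least $p_v$. Equilibrium therefore yields $q_v := P[D=v] \ge p_v$ for \emph{every} value $v$, and summing over $v$ with $\sum_v q_v = \sum_v p_v = 1$ (\textbf{Termination}) forces $q_v = p_v$ for all $v$: the output distribution equals the input marginal. Next I would extract ``lone-agent dictatorship'' pins. If $\Phi$ decided $v$ on a near-constant configuration $(v,\dots,v,w)$ with $w\neq v$ at agent $i$, then fabricating $(v,\dots,v)$ would make the network decide $v$ for \emph{both} values of $I_i$, valid whenever $v$ occurs among the true inputs, giving the $v$-preferring coalition expected utility $1-(1-p_v)^n > p_v = q_v$ for $n\ge 2$, a contradiction. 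Hence in equilibrium $D(v,\dots,v,w)=w$: agent $i$'s input overrides a unanimous coalition.

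The remaining and hardest step is to amplify these pins into full uniformity \emph{without} invoking Theorem~\ref{th_xor}, which is proved only afterwards from this very result. The idea is that the pins let the coalition steer the decision as a controllable function of its fabricated vector, so by choosing $\vec a$ it can bias the outcome toward the \emph{more likely} input value rather than settle for the symmetric honest value $q_v = p_v$. Making this precise, a coalition preferring a value $v^{*}$ of maximal probability should be able to guarantee expected utility approaching $\max_v p_v$ (up to the vanishing chance that $v^{*}$ is absent from all true inputs), which strictly exceeds $q_{v'} = p_{v'}$ for any less-likely value $v'$ whenever the distribution is non-uniform; this contradicts $q_{v^{*}} = p_{v^{*}}$ and forces $p_v = 1/r$ for all $v$. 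The main obstacle is exactly this structural control of the honest decision function on \emph{mixed} fabricated configurations while XOR is not yet available. I expect to obtain it by bootstrapping from the unanimity pins together with the constraints from treating the lone agent $i$ as a singleton coalition (which pins how the decision reacts to $i$'s own report), and by exploiting that the \emph{same} protocol $\Phi$ must simultaneously resist the coalitions biased toward each of $v_1$ and $v_2$.
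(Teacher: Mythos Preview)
Your first two steps are on the right track and essentially reproduce the paper's key lemma that $P[D=v\mid s]=p_v$ for every coalition input vector $s$. Note that your equilibrium argument in step~2 actually yields the \emph{conditional} inequality $P[D=v\mid s]\ge p_v$ for every $s$ (the coalition with true input $s$ must do at least as well honestly as by fabricating), not just the marginal $q_v\ge p_v$; summing over $v$ then gives the conditional equality, and this is what you need later --- comparing against the marginal $q_v$ in step~3 is too weak. Step~3 then correctly pins $D(v,\dots,v,w)=w$, though your utility figure $1-(1-p_v)^n$ is off: the clean version is that if $v$ lies among the coalition's \emph{true} inputs $s$ and $P[D=v\mid (v,\dots,v),I_i=w]>0$ for some $w\ne v$, then fabricating $(v,\dots,v)$ gives utility strictly above $p_v=P[D=v\mid s]$, a contradiction.

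The genuine gap is step~4, and the route you sketch does not work. A coalition preferring the most-likely value $v^{*}$ already obtains honest utility $p_{v^{*}}=\max_v p_v$ by step~2, so showing that a deviation secures $\approx \max_v p_v$ is no contradiction at all; and comparing that number to $q_{v'}$ for a \emph{different} preferred value $v'$ is irrelevant to the equilibrium condition for the $v^{*}$-preferring coalition. The paper's finishing move is one line once you have the conditional equality and the pin, and it goes in the opposite direction --- focus on the \emph{less} likely value. Assume $p_{v_1}>p_{v_2}$ and take the coalition configuration $s'=(v_2,v_1,\dots,v_1)\in S_{-i}$, i.e.\ one coalition member holds $v_2$ and the rest hold $v_1$. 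When $I_i=v_1$, the global input has a single $v_2$ surrounded by $v_1$'s; the pin from step~3 (which, by rerunning the same argument with that member as the lone honest agent, holds regardless of which agent is the odd one out) forces the decision to be $v_2$. Hence
\[
P[D=v_2\mid s']\ \ge\ P[I_i=v_1]\ =\ p_{v_1}\ >\ p_{v_2},
\]
contradicting $P[D=v_2\mid s']=p_{v_2}$. No bootstrapping or ``structural control on mixed configurations'' is needed; the single shifted configuration $s'$ already does the job.
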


\begin{theorem}
	\label{th_uniform_out}
	In any $(n-1)$-resilient equilibrium for consensus, given any $n-1$ inputs,
		the distribution over the possible decision values is uniform:
	$\forall s \in S_{-i},v \in \{0,...r-1\}: P[D_i=v|s] = \frac{1}{r}$
\end{theorem}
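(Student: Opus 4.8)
The plan is to exploit the $(n-1)$-resilient equilibrium property against the coalition $C = V \setminus \{i\}$, whose members jointly know their input vector $s \in S_{-i}$ while agent $i$'s input $I_i$ remains hidden. Write $q_s(v) := P[D_i = v \mid s]$ for the decision distribution induced by the honest protocol when the other agents hold $s$ (the randomness being $I_i$, which is uniform by Theorem~\ref{th_uniform}, together with the protocol's coins). I would first isolate a family of safe coalition deviations: the coalition, holding true inputs $s$, runs the honest protocol but feeds it a fabricated input vector $s'$. From agent $i$'s viewpoint this execution is indistinguishable from an honest run on $(I_i, s')$, so agreement and termination are preserved and the induced decision distribution is exactly $q_{s'}$.

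The key observation is that such a deviation stays \emph{legal} precisely when $\mathrm{supp}(s') \subseteq \mathrm{supp}(s)$: the protocol guarantees the decided value lies in $\{I_i\} \cup \mathrm{supp}(s')$, which is then contained in $\{I_i\} \cup \mathrm{supp}(s)$, the true input set, so validity holds. Now I would invoke equilibrium: were $q_{s'}(v) > q_s(v)$ for some value $v$, a coalition that prefers $v$ (utility $1$ when $v$ is legally decided, $0$ otherwise, consistent with Solution Preference) would strictly gain by switching from $s$ to $s'$, a contradiction. Hence $q_{s'}(v) \le q_s(v)$ for every $v$; since both are probability distributions summing to $1$, this forces $q_{s'} = q_s$ whenever $\mathrm{supp}(s') \subseteq \mathrm{supp}(s)$.

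From this I would derive that $q_s$ does not depend on $s$ at all. Taking $s'$ to be the constant vector $(c,\dots,c)$ for any $c \in \mathrm{supp}(s)$ gives $q_{(c,\dots,c)} = q_s$; and for any two values $c, c'$, choosing an $s$ whose support contains both (possible since $G$ is $2$-connected, so $n-1 \ge 2$) shows $q_{(c,\dots,c)} = q_{(c',\dots,c')}$. Thus all the $q_s$ coincide with a single distribution $q^*$. It remains to show $q^*$ is uniform, which is where validity re-enters: in the all-$c$ configuration $s=(c,\dots,c)$, whenever $I_i = c$ as well the only present value is $c$, so the decision must be $c$; since $I_i$ is uniform, $q^*(c) = q_s(c) \ge P[I_i = c] = 1/r$. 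As this holds for every value $c$ and $\sum_c q^*(c) = 1$, every inequality is tight and $q^*(c) = 1/r$, giving $P[D_i = v \mid s] = 1/r$ for all $s$ and $v$.

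I expect the main obstacle to be the legality argument underpinning the equilibrium step: one must argue rigorously that the ``simulate-honest-on-$s'$'' deviation is genuinely available to the coalition, that it is indistinguishable to $i$ so agreement and termination are never broken, and --- most delicately --- that the support-containment condition is exactly what prevents a validity violation, so that Solution Preference cannot neutralize the coalition's gain. Establishing this cleanly, including the implicit quantification over all coalition preferences so that the conclusion holds for any equilibrium rather than one tailored to a fixed utility, is the crux; the remaining support-independence and validity computations are then routine.
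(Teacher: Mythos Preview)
Your approach is essentially the paper's. The paper packages the argument into two lemmas---Lemma~\ref{lemma_set_equal}, which asserts $P[D_i=v\mid s_1]=P[D_i=v\mid s_2]$ for \emph{all} $s_1,s_2\in S_{-i}$ via exactly the coalition-simulates-another-input deviation you describe, and Lemma~\ref{lemma_out_in}, which applies validity to the constant vector $(v,\dots,v)$ to get $P[D_i=v\mid s]\ge P[I_i=v]$ and then equality by summing to $1$---after which Theorem~\ref{th_uniform_out} is a one-line combination with Theorem~\ref{th_uniform}. Your write-up is in fact more careful on the point you flag as the crux: the paper's proof of Lemma~\ref{lemma_set_equal} does not discuss whether the simulated deviation might break validity, whereas you isolate the condition $\mathrm{supp}(s')\subseteq\mathrm{supp}(s)$ as precisely what keeps every outcome legal, at the price of the short extra step of linking all $q_s$ through the constant vectors $(c,\dots,c)$.
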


Notice that while the proof of theorem~\ref{th_xor} holds only for \emph{binary} consensus, theorems~\ref{th_uniform} and~\ref{th_uniform_out} are correct for multi-valued consensus as well.

\begin{proof}[Proof of Theorem~\ref{th_xor}]
	We prove that the decision value of binary consensus must be the XOR of all inputs
	using induction on $\#(1)$,
	the number of agents in the network whose input value is $1$.
	
	In the base case $\#(1)=0$, the input of all agents is $0$.
	By validity the decision must be $0$.
	
	For clarity of exposition we spell out the next case of the induction, $\#(1)=1$, i.e.,
	the input of one agent is $1$ and of all other $n-1$ agents is $0$.
	Assume by contradiction that the probability that $0$ is decided in this case is greater than $0$, i.e.,
	$$\exists i\in V:\ P[D_i = 0\ |\ \#_{-i}(0)=0\ \land\ I_i = 1] = p > 0 $$
	Let $s_{0}$ be an input configuration for a coalition in which all members of the coalition
	(i.e., $V\setminus \{i\}$) claim to receive 0 as input, i.e., $\#_{-i}(1)=0$.
	Notice that:
	\begin{align}
	P[D_i = 0 |\ s_0] &= P[I_i=0]\cdot P[D_i = 0\ |\ s_0 \land I_i = 0] + P[I_i=1]\cdot P[D_i = 0\ |\ s_0 \land I_i = 1] \notag\\
	&= P[I_i=0]\cdot P[D_i = 0\ |\ base\ case\ ] + P[I_i=1]\cdot P[D_i = 0\ |\ s_0 \land I_i = 1] \notag\\
	&= P[I_i=0]\cdot 1 + P[I_i=1]\cdot p \notag
	\end{align}
	By Theorem~\ref{th_uniform} (and since this is binary consensus) it follows that:
	$$P[D_i = 0 |\ s_0] =  \frac{1}{2} \cdot 1 + \frac{1}{2} \cdot p > \frac{1}{2}$$
	Thus, contradicting Theorem~\ref{th_uniform_out} and proving that, $\forall i\in V$:
	$ P[D_i = 0\ |\ \#_{-i}(0)=0\ \land\ I_i = 1] = 0$
	Thus if $\#(1)=1$, the decision value must be $1$, proving the first induction step. 

	By the inductive assumption, $\forall \#(1) < m$
	the decision value of the consensus must be the XOR of all inputs, i.e.,
	$\#(1)\ mod\ 2$.
	Let $s_{m-1}$ be an input configuration for the coalition ($V\setminus \{i\}$) in which $\#_{-i}(1) = m-1$,
	that is, $m-1$ members of the coalition claim to receive $1$, and the rest $0$.
	
	From Theorem~\ref{th_uniform_out} (and since this is binary consensus) we get: 
	$$P[D_i = (m\ mod\ 2)\ |\ s_{m-1}] = \frac{1}{2}$$
	If $I_i=0$ (which from Theorem~\ref{th_uniform} happens with probability $\frac{1}{2}$) and the coalition acts as if its input is $s_{m-1}$, then $\#(1) = m-1$.
	By the induction hypothesis,
	in such a case the decision value of the consensus must be $m-1\ mod\ 2$.
	To satisfy the equation above it must hold that:
	$$P[D_i = (m\ mod\ 2)\ |\ s_{m-1}\land I_i=1] = 1$$
	Hence, in case $\#(1) = m$, the decision value must be $m\ mod\ 2$ - the XOR of all inputs.
\end{proof}
	
\subsection{Proving Theorems~\ref{th_uniform} and~\ref{th_uniform_out}}
\label{section_xor_lemmas}

While the above proof holds only for \emph{binary} consensus,
the following lemmas and theorems are correct for multi-valued consensus.

\begin{lemma}
	\label{lemma_set_equal}
	In any $(n-1)$-resilient equilibrium for consensus,
	for any $v \in \{0,\dots,r-1\}$,
	given any $n-1$ inputs,
	the probability to decide $v$ is the same:\\
	$$\forall i \in V, s_1,s_2 \in S_{-i}, v: P[D_i=v|s_1] = P[D_i=v|s_2]$$
\end{lemma}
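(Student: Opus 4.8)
The plan is to argue by contradiction using the indistinguishability power of the grand coalition $C = V\setminus\{i\}$ of size exactly $n-1$. Suppose the claim fails, so for some fixed $i$ and $v$ the function $s\mapsto P[D_i=v\mid s]$ is not constant on $S_{-i}$; pick $s_1,s_2\in S_{-i}$ with $P[D_i=v\mid s_1]>P[D_i=v\mid s_2]$. Consider a coalition $C$ whose utility is $1$ exactly when $v$ is legally decided and $0$ otherwise, and whose real inputs are $s_2$. The deviation $\Psi$ I would use is: the agents of $C$ run $\Phi$ \emph{as if} their inputs were $s_1$, sending precisely the messages an honest $s_1$-execution would produce. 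Since $i$ is the only honest agent and follows $\Phi$, its transcript is identical to a genuine $s_1$-run, so its decision is distributed exactly as $P[D_i=\cdot\mid s_1]$; in particular $v$ is decided with probability $P[D_i=v\mid s_1]$, and agreement and termination still hold because every agent behaves as in a real $s_1$-run.

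The clean contradiction occurs whenever $v$ is guaranteed \emph{valid} under the true inputs $s_2$, i.e.\ $v$ appears in $s_2$: then every decision of $v$ is a legal outcome, so the deviation yields expected utility $P[D_i=v\mid s_1]>P[D_i=v\mid s_2]$, which equals the honest expected utility, contradicting $(n-1)$-resilience. To reach all pairs I would first reduce to $s_1,s_2$ that differ in a single coordinate $j$ (say $s_1$ has $I_j=a$, $s_2$ has $I_j=b$), since equality is transitive along a path of single-coordinate changes that connects any two configurations. Writing $M$ for the inputs shared by $C\setminus\{j\}$ in both configurations, the template above (run in both directions) immediately forces $P[D_i=v\mid s_1]=P[D_i=v\mid s_2]$ for every $v\in M$, which are valid in both configurations. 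For a value $v\notin M\cup\{a,b\}$ the same deviation still works cleanly: by validity, both the simulated $s_1$-run and the real-input validity check can only yield such a $v$ when $I_i=v$, so counting a decision of $v$ automatically makes it legal, and equality again follows from running the argument in both directions.

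The main obstacle is the pair of \emph{swing} values $a$ and $b$ on the differing coordinate: faking $a$ (when the truth is $b$) decides $a$ legally only when $a$ is genuinely present, i.e.\ when $I_i=a$, so the coalition cannot freely trade a legal decision of $a$ for the full probability mass $P[D_i=a\mid s_1]$ without risking erroneous outcomes. Running the two ``safe'' deviations that are available — boosting $b$ from real inputs $s_2$, and boosting $a$ from real inputs $s_1$, in each of which the boosted value is truly present — yields only the one-sided bounds $P[D_i=b\mid s_1]\le P[D_i=b\mid s_2]$ and $P[D_i=a\mid s_1]\ge P[D_i=a\mid s_2]$. To close the gap I would combine these with conservation of total decision probability, which by termination gives
$$\sum_{v}P[D_i=v\mid s_1]=\sum_{v}P[D_i=v\mid s_2]=1,$$
so that, since equality already holds off $\{a,b\}$, the two swing discrepancies must be equal and opposite; the delicate part is then squeezing them to zero, which I expect to require a more careful coalition deviation together with the Solution-Preference ``risk'' clause to show any residual bias is strictly exploitable. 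I would also treat the small-network edge case $n=2$ (where $C\setminus\{j\}$ is empty, so no common value $M$ is available) separately. The validity bookkeeping around these swing coordinates is where I anticipate the real work lies.
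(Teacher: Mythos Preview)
Your core move --- have the size-$(n-1)$ coalition $V\setminus\{i\}$ simulate a different input vector and thereby shift $P[D_i=v\mid\cdot]$ --- is exactly the paper's argument, and the paper's entire proof is two sentences: assume $P[D_i=v\mid s_1]<P[D_i=v\mid s_2]$, then a coalition with preference $v$ and true input $s_1$ deviates by acting as if its input were $s_2$, contradicting $(n-1)$-resilience. There is no decomposition into single-coordinate changes, no distinction between ``common'' values and ``swing'' values, and no validity bookkeeping whatsoever.

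All of the extra structure you introduce is thus absent from the paper, and it is also precisely where your own argument stalls. You correctly observe that pretending $s_1$ when the truth is $s_2$ may produce a decision of $v$ that violates validity against the true inputs, and you try to route around this via adjacent configurations; but for the two swing values $a,b$ you obtain only the one-sided inequalities $P[D_i=a\mid s_1]\ge P[D_i=a\mid s_2]$ and $P[D_i=b\mid s_1]\le P[D_i=b\mid s_2]$, and then defer the actual closing step to an unspecified ``more careful coalition deviation.'' The total-probability identity you invoke only tells you the two discrepancies are equal in magnitude and opposite in sign --- it does not force them to zero --- so as written your proof is incomplete at exactly the point you flagged. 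In short, your proposal is the paper's argument plus added machinery that neither matches the paper nor finishes the job; either close the swing-value case explicitly, or present the short direct argument as the paper does.
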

\begin{proof}
	Assume by contradiction that
	$\exists i \in V, s_1,s_2 \in S_{-i},v: P[D_i=v|s_1] < P[D_i=v|s_2]$.
	A coalition $C = V \setminus \{i\}$ with a preference to decide $v$,
	and that receives $s_1$ as input, has an incentive to deviate and act as if their input is $s_2$,
	contradicting equilibrium.
\end{proof}

\begin{lemma}
	\label{lemma_out_in}
	In any $(n-1)$-resilient equilibrium for consensus, for any input $v\in \{0,\dots,r-1\}$,
	the probability to decide $v$ is the same as the probability to receive $v$ as an input:
	$$\forall i \in V, s \in S_{-i}, v: P[D_i=v | s] = P[I_i=v]$$
\end{lemma}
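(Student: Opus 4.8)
The plan is to use Lemma~\ref{lemma_set_equal} to collapse the quantity of interest to a single number and then fix that number using validity. Fix an agent $i$ and a value $v$. By Lemma~\ref{lemma_set_equal}, $P[D_i = v \mid s]$ is independent of the coalition input $s \in S_{-i}$, so write $q_v := P[D_i = v \mid s]$ for this common value; the goal is to show $q_v = P[I_i = v]$. Because $s$ is now a free parameter, I am free to evaluate $q_v$ on whichever coalition configuration makes validity bite.

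First I would condition on agent $i$'s own input, which is drawn independently of the coalition's, giving $q_v = \sum_{u} P[I_i = u]\,P[D_i = v \mid s, I_i = u]$. The decisive configuration is $s^{\bar v}$, in which every coalition member reports some value other than $v$ (possible since $r \ge 2$). Under $s^{\bar v}$, if $I_i \ne v$ then no agent holds $v$ at all, so validity forces $D_i \ne v$ and the corresponding term vanishes. Only $u = v$ survives, yielding $q_v = P[I_i = v]\cdot P[D_i = v \mid s^{\bar v}, I_i = v] \le P[I_i = v]$.

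This one-sided bound already finishes the proof: it holds for every $v$ (each with its own witness $s^{\bar v}$), and by termination the decision is always some value in $\{0,\dots,r-1\}$, so $\sum_v q_v = 1 = \sum_v P[I_i = v]$. A term-by-term-dominated sum of nonnegatives that matches in total must agree termwise, hence $q_v = P[I_i = v]$ for all $v$. If a direct two-sided argument is preferred instead, the matching lower bound $q_v \ge P[I_i = v]$ comes from the configuration $s^{v}$ in which all coalition members report $v$: there the $I_i = v$ term equals $P[I_i = v]$ by validity and agreement, and the remaining terms are nonnegative.

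The argument is short, so the obstacles are mild and mostly a matter of convention. I must justify pulling $P[I_i = u]$ out of the conditional, which relies on inputs being independent across agents, and I must read $P[D_i = v \mid s]$ as the protocol's response to the profile $s$---exactly as Lemma~\ref{lemma_set_equal} already does---so that conditioning on a particular $s$ is never vacuous. The only genuine choice is picking the extremal profiles $s^{\bar v}$ (and, if desired, $s^{v}$) so that validity determines $D_i$ precisely in the cases that matter.
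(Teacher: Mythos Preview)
Your proof is correct and follows essentially the same strategy as the paper: use Lemma~\ref{lemma_set_equal} to make $P[D_i=v\mid s]$ constant in $s$, pick an extremal coalition profile so that validity yields a one-sided inequality, and then conclude equality from $\sum_v q_v = \sum_v P[I_i=v] = 1$. The only cosmetic difference is that the paper takes the all-$v$ profile to obtain the lower bound $q_v \ge P[I_i=v]$, whereas your primary argument takes an all-non-$v$ profile to obtain the upper bound (and you note the paper's direction as an alternative).
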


\begin{proof}
	For any $v$, if all inputs are $v$ then by validity $v$ is decided.
	For any agent $i$, let $\tilde{s}=(v,\dots,v) \in S_{-i}$,
	then due to validity, the probability that $v$ is decided is at least $P[I_i=v]$,
	i.e., $P[D_i=v|\tilde{s}] \geq P[I_i=v]$.
	By Lemma~\ref{lemma_set_equal} this is true for any $s \in S_{-i}$.
	Thus, $P[D_i=v] \geq P[I_i=v]$.
	Since $\sum\limits_v P[D_i=v]=1$ and $\sum\limits_v P[I_i=v]=1$,
	then: $\forall s \in S_{-i},v: P[D_i=v|s] = P[I_i=v]$.
\end{proof}

\begin{proof}[Proof of Theorem~\ref{th_uniform}]
	Assume by contradiction that $\exists v1,v2: P[I_i=v_1] > P[I_i=v_2]$.
	
	If all agents receive as input the same value $v_1$, then by validity $v_1$ is decided.
	Given $s=(v_1,\dots,v_1)\in S_{-i}$,
	the probability that $v_1$ is decided is at least the probability that the input of agent $i$ is $v_1$,
	i.e., $P[D_i=v_1 | s] \geq P[I_i=v_1]$.
	
	If $n-1$ agents receive $v_1$ as input and one agent receives $v_2 \neq v_1$ as input
	the decision must not be $v_1$
	otherwise $P[D_i=v_1 | s] > P[I_i=v_1]$ contradicting Lemma~\ref{lemma_out_in},
	thus due to validity the decision must be $v_2$
	when $n-1$ agents receive $v_1$ and one agent receives $v_2$.
	%	$P[D_i=v_2 | s] \geq P[I_i = v_2]$.
	
	Let $s' = (v_2, v_1, \dots, v_1) \in S_{-i}$.
	If agent $i$ receives $v_1$ as input then as stated above $v_2$ is decided, thus:
	$P[D_i=v_2 | s'] \geq P[I_i=v_1] > P[I_i=v_2]$,
	contradicting Lemma~\ref{lemma_out_in}.
	
	Thus, the input distribution must be uniform, i.e.:
	$\forall v_1,v_2: P[I_i=v_1]=P[I_i=v_2]=\frac{1}{r}$.
\end{proof}

\begin{proof}[Proof of Theorem~\ref{th_uniform_out}]
	Combining Lemma~\ref{lemma_out_in} with Theorem~\ref{th_uniform} :
	$$\forall s \in S_{-i},v \in \{0,...r-1\}: P[D_i=v|s] = P[I_i=v]=\frac{1}{r} ~\qedhere$$
\end{proof}

\subsubsection{$(n-2)$-resilient Binary Consensus for any $n$}
\label{section_cover_bin_cons}
A binary consensus protocol for any $n$ is presented in \cite{GTBlocks}
combining a leader election algorithm with a XOR on selected inputs.
In Appendix~\ref{section_res_even} we prove that this protocol reaches $(n-2)$-resilient equilibrium
for binary consensus for any $n$, when the input distribution is uniform.
Note that the algorithm in \cite{GTBlocks} does not work in any network topology,
but on any network in which Resilient Input Sharing is possible
(see \cite{GTBlocks} and Section~\ref{section_sufficient}).

%(leader election, then XOR of all inputs omitting the input of the leader)
%is suggested for these networks in \cite{GTBlocks}.
%In Appendix~\ref{section_res_even} we present said protocol and prove it to reach $(n-2)$-resilient equilibrium
%for networks with uniform input distribution.

\section{No $(n-1)$-resilient Equilibrium for Multi Valued Consensus}
\label{section_multi}

Here we discuss multi-valued consensus,
where the agreement is between $r>2$ possible values rather than two values.
Applying the same logic as in the proof of Theorem~\ref{th_xor} one can deduce:
\begin{lemma}
	\label{lemma_multi_val_helper}
	\quad
	\begin{enumerate}
		\item $\forall i\in V, v\in \{0,\dots,r-1\}: P[D_i=v | \#(0) = n-1 \land \#(v) = 1] = 1$
		\item $\forall i\in V, v\in \{0,\dots,r-1\}: P[D_i=0 | \#(0) = n-2 \land \#(v) = 2] = 1$
	\end{enumerate}
\end{lemma}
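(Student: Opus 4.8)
The plan is to prove both statements by mimicking the inductive XOR argument from Theorem~\ref{th_xor}, now carried out in the multi-valued setting and exploiting Theorem~\ref{th_uniform_out}, which already guarantees a uniform output distribution (probability $\frac{1}{r}$) over the $r$ decision values given any fixed $n-1$ inputs. The overall strategy for each part is identical: fix an agent $i$, let the coalition $V \setminus \{i\}$ claim a particular input configuration, and then condition on the two equiprobable (by Theorem~\ref{th_uniform}) cases for $I_i$. Because the output distribution must remain exactly $\frac{1}{r}$ no matter what, any certainty forced in one of the two cases by validity pins down the decision in the other.

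For part (1), I would proceed by induction on the number of agents holding the nonzero value $v$, exactly as in Theorem~\ref{th_xor}. The base case is all-zero input, where validity forces decision $0$. For the inductive step, I let $s$ be a coalition configuration in which all of $V \setminus \{i\}$ claim input $0$. Conditioning on $I_i$: with probability $\frac{1}{2}$ (uniform input, restricted to the relevant two values) agent $i$ has input $0$, giving the all-zero case where $0$ is decided with certainty; with the remaining probability $i$ has input $v$. Since Theorem~\ref{th_uniform_out} forces $P[D_i = 0 \mid s] = \frac{1}{r}$, and since the all-zero branch already contributes positively to deciding $0$, the argument shows that in the branch where $i$ holds $v$ the decision $0$ must occur with probability zero; validity then forces decision $v$. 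This yields the statement that when exactly one agent holds $v$ and the rest hold $0$, the decision is $v$ with probability $1$.

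For part (2), the same template applies but now the target decision value is $0$ even though two agents hold $v$. Here I would let the coalition claim a configuration with $\#(0) = n-2$ and $\#(v) = 1$ among its members, so that agent $i$'s input determines whether the total count of $v$-holders is $1$ or $2$. By part (1), the branch where $i$ contributes $0$ (total $\#(v)=1$) forces decision $v$ with certainty. Combining this with the uniform-output constraint from Theorem~\ref{th_uniform_out} and with the constraints already established, the remaining branch (total $\#(v)=2$) is forced to decide $0$ with probability $1$, which is precisely the claim. The key is that validity permits $0$ here because two agents hold $v$ and the rest hold $0$, so $0$ is a valid decision.

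The main obstacle I anticipate is bookkeeping the conditional-probability decomposition cleanly when more than two distinct values could in principle appear, and ensuring that the "restricted uniform" weighting of $\frac{1}{2}$ between the two relevant values is justified rather than the full $\frac{1}{r}$ — one must argue that only the two values in play ($0$ and $v$) are relevant to the branch analysis, or equivalently condition on $I_i \in \{0,v\}$ and renormalize. The cleanest route is to set up each equation in the form $\frac{1}{r} = P[I_i = 0]\cdot(\text{known certainty}) + (\text{other terms})$ and read off the forced value, exactly paralleling the display equations in the proof of Theorem~\ref{th_xor}; the rest is routine, which is why the authors state these results can be obtained by "applying the same logic."
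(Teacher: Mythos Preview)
Your approach is the same as the paper's --- it literally says ``the proof is the same as the first and second induction steps in the proof of Theorem~\ref{th_xor}'' --- and your outline is essentially correct. The one wobble is the $\tfrac{1}{2}$ versus $\tfrac{1}{r}$ issue you flag at the end: drop the ``restrict to $\{0,v\}$ and renormalize'' idea entirely, since Theorem~\ref{th_uniform_out} constrains the \emph{unconditional} probability $P[D_i=0\mid s]$, not a conditioned-and-renormalized version. Just write
\[
\tfrac{1}{r} \;=\; P[D_i=0 \mid s_0] \;=\; P[I_i{=}0]\cdot 1 \;+\; \sum_{w\neq 0} P[I_i{=}w]\cdot P[D_i=0 \mid s_0 \wedge I_i{=}w] \;=\; \tfrac{1}{r} + \tfrac{1}{r}\sum_{w\neq 0}(\cdots),
\]
and read off that every term in the sum vanishes; validity then pins the decision to $w$ in each branch, giving part~(1) for all $v$ at once. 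The same bookkeeping handles part~(2): when the coalition claims $s_v$ with $\#_{-i}(v)=1$, the branches $I_i=w$ for $w\notin\{0,v\}$ also appear in the sum and are forced to zero along with the $I_i=v$ branch, after which validity in the $\#(v)=2$ configuration yields $D_i=0$.
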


\begin{proof}
The proof is the same as the first and second induction steps in the proof of Theorem~\ref{th_xor}.
\end{proof}

\begin{theorem}
	\label{th_multi}
	There is no $(n-1)$-resilient equilibrium for multi-valued consensus for any $r>2$
\end{theorem}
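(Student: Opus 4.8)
The plan is to leverage Lemma~\ref{lemma_multi_val_helper}, which already fixes the decision on the ``almost unanimous'' configurations, and combine it with the uniform output distribution of Theorem~\ref{th_uniform_out} to force the decision on a configuration that mixes three distinct values. Since $r>2$ I may pick three pairwise-distinct values $0,a,b$, and I consider the global (honest) input configuration $\gamma$ in which $n-2$ agents hold $0$, one agent holds $a$, and one agent holds $b$. I will show that $\gamma$ forces two incompatible decisions, contradicting agreement.

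First I would fix the agent $i$ whose input is $b$ and look at its coalition $V\setminus\{i\}$, whose input is the vector $s$ consisting of $n-2$ zeros and a single $a$. Conditioning on $I_i$ (which is uniform and independent by Theorem~\ref{th_uniform}) and using Theorem~\ref{th_uniform_out}, I decompose $P[D_i=0\mid s]$ and $P[D_i=a\mid s]$ into $r$ equally weighted terms. By Lemma~\ref{lemma_multi_val_helper}, the $I_i=0$ term decides $a$ (almost-unanimous $a$) while the $I_i=a$ term decides $0$ (two $a$'s against $n-2$ zeros), so each of these two marginals already receives a contribution of exactly $\tfrac1r$ from a single term. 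Since Theorem~\ref{th_uniform_out} caps each marginal at $\tfrac1r$ and all remaining conditional probabilities are non-negative, every other term must vanish; in particular, when $I_i=b$ the decision is neither $0$ nor $a$. As the only inputs present in $\gamma$ are $0,a,b$, validity then forces the decision to be $b$.

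Next I would repeat the identical computation with the roles of $a$ and $b$ exchanged: now $i'$ is the agent holding $a$, its coalition holds $n-2$ zeros and a single $b$, and the same argument forces the decision on $\gamma$ to be $a$. Since agreement makes the decision a single value and the two analyses describe the very same honest configuration $\gamma$, I obtain a decision that is simultaneously $a$ and $b$ with $a\neq b$, a contradiction. Hence no $(n-1)$-resilient equilibrium for multi-valued consensus can exist when $r>2$.

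I expect the main obstacle to be the middle step: carefully arguing that the uniform-output constraint of Theorem~\ref{th_uniform_out}, together with the two forced contributions supplied by Lemma~\ref{lemma_multi_val_helper}, pins the conditional decision for $I_i=b$ down to a single value. The crux is that once two of the $r$ equal-weight terms are saturated at their target mass $\tfrac1r$, non-negativity of the remaining conditional probabilities leaves no room for any mass on $0$ or $a$; validity and the three-value symmetry then close the argument. Intuitively this is exactly where the clean binary XOR of Theorem~\ref{th_xor} fails to extend: for $r>2$ there is no consistent value-assignment that both respects these forced decisions and stays symmetric in $a$ and $b$.
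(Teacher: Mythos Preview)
Your proposal is correct and follows essentially the same route as the paper: both fix the configuration with $n-2$ zeros, one $a$, one $b$, condition on $I_i$ against the coalition vector $s$ with one $a$ and $n-2$ zeros, and use Lemma~\ref{lemma_multi_val_helper} to saturate the $\tfrac1r$ mass for the outcomes $0$ and $a$, forcing $P[D_i\in\{0,a\}\mid s,\,I_i=b]=0$. The only cosmetic difference is the endpoint of the contradiction: the paper rules out $0$, $a$, and (by the symmetric $s_u$ computation) $b$ and invokes validity directly, whereas you use validity to pin the decision to $b$, repeat with roles swapped to pin it to $a$, and then invoke agreement; these are two phrasings of the same argument.
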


\begin{proof}
	Assume towards a contradiction that there is an $(n-1)$-resilient equilibrium for multi-valued consensus for some $r>2$.
	Let $v,u \in \{1,\ldots, r-1\}$ s.t. $v \neq u$. 
	Denote by $X$ any configuration in which the input of one agent is $v$, of another is $u$, and of the rest is $0$. 
		In a run of the protocol starting from $X$, due to validity the network's decision value must be
		either $0$ or $u$ or $v$.
		We prove that none of these values can be decided in an equilibrium, reaching a contradiction.
	Consider some Agent $i$ and coalition $V \setminus \{i\}$.
	Define $s_v$ and $s_u$ as follows:
	\begin{itemize}
		\item $s_v :=$ a configuration in which $\#_{-i}(0) = n-2$, $\#_{-i}(v) = 1$
		\item $s_u :=$ a configuration in which $\#_{-i}(0) = n-2$, $\#_{-i}(u) = 1$
	\end{itemize}
	Assume towards a contradiction that 
	$P[D_i=0 |s_v \land I_i = u ] = p > 0$. Notice that $(s_v \land I_i = u) \in X$.
	
	By point $2$ of Lemma~\ref{lemma_multi_val_helper},
	if $I_i=v$ and the coalition acts as if their input vector is $s_v$, then $i$ must decide $0$.
	By Theorem~\ref{th_uniform}, $P[I_i=v] = \frac{1}{r}$, therefore,
	$P[D_i=0 | s_v] \geq \frac{1}{r} + \frac{p}{r}  > \frac{1}{r}$,
	contradicting Lemma~\ref{lemma_out_in}. Thus, in an equilibrium starting from configuration $X$, the decision value cannot be $0$.
	
	Assume towards a contradiction that: $P[D_i=v | s_v \land I_i = u ] = p > 0$.
	
	Notice that from point $1$ of Lemma~\ref{lemma_multi_val_helper},
	if $I_i=0$ and the coalition acts as if their input vector is $s_v$,
	then $i$ must decide upon $v$.
	As before we get:
	$P[D_i=v |s_v] \geq \frac{1}{r} + \frac{p}{r}  > \frac{1}{r}$,
	contradicting Lemma~\ref{lemma_out_in}. Thus, in an equilibrium starting from configuration $X$, the decision value cannot be $v$.
	
	Applying the symmetric claim for $u$, with a coalition that acts as if their input vector is $s_u$, we get that in an equilibrium starting from configuration $X$, the decision value cannot be $u$.
	
	Thus, no value from $\{0,u,v\}$ can be decided in
	an $(n-1)$-resilient equilibrium for multi-valued consensus starting with configuration $X$.
	Hence, due to validity there is no $(n-1)$-resilient equilibrium for $r$-valued consensus for any $r>2$. 
\end{proof}

\section{Necessary and Sufficient conditions for Deterministic Consensus}
\label{section_sufficient}

The necessary conditions from Section~\ref{section_necessary} are extended here into
necessary and \emph{sufficient} conditions for a \emph{deterministic} $(n-1)$-resilient equilibrium for binary consensus.
Deterministic means that the step of each agent in each round of the algorithm
is determined completely by its input and the history of messages it has received up until the current round.
In Appendix~\ref{section_non_det} some difficulties in trying to extend our proof to non-deterministic algorithms are provided. 
For the sufficient condition, a new problem -  Resilient Input Sharing (RIS),
a variant of knowledge sharing \cite{GTBlocks}, is introduced.
% \Xomit{The new version of Theorem~\ref{th_necessary} is now:}

\begin{theorem}%[Necessary and sufficient conditions for a deterministic $n-1$-strong equilibrium for consensus]
	\label{th_iff}
	A deterministic $(n-1)$-resilient equilibrium for consensus exists \emph{iff}:
	\begin{enumerate}
		\item $n$ is odd
		\item The input distribution is uniform
		\item There exists an algorithm for deterministic \EDP{} (defined below).
	\end{enumerate}
\end{theorem}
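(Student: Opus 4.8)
The plan is to prove the two implications separately. For the forward direction, suppose a deterministic $(n-1)$-resilient equilibrium for binary consensus exists. Conditions~(1) and~(2) are immediate: $n$ must be odd by Corollary~\ref{cor_noeven}, and the input distribution must be uniform by Theorem~\ref{th_uniform}. The substance is condition~(3): extracting a deterministic \EDP{} algorithm from the consensus equilibrium. Guided by the ``strong sense'' in which each agent is supposed to learn all inputs, I would show that in a deterministic $(n-1)$-resilient protocol the final view of every agent $i$ (its message history together with its private input) determines the entire input vector of $V \setminus \{i\}$, so that the \EDP{} output of $i$ can be defined as the tuple decoded from its view, and \EDP{}'s resilience would be inherited from that of the consensus protocol.

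For the backward direction, assume (1)--(3) and construct consensus as follows: run the given \EDP{} algorithm so every agent learns the whole input vector, then have each agent output the XOR of that vector. Agreement holds since all agents compute the same XOR; validity holds since $n$ is odd, whence the all-$v$ input yields $\bigoplus_j v = v$; and termination is inherited from \EDP{}. For the equilibrium property I argue by indifference. Agent $i$ is honest, so it always XORs in its true $I_i$, and the resilience of \EDP{} ensures a coalition $V \setminus \{i\}$ can influence only its own shared coordinates and cannot adapt them to $I_i$ (each contribution is committed before the others are revealed, as in knowledge sharing). Hence $i$'s decision has the form $D_i = I_i \oplus c$ with $c$ independent of $I_i$; since $I_i$ is uniform by~(2), $D_i$ is uniform over $\{0,1\}$ under every coalition strategy, matching Theorem~\ref{th_uniform_out}. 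No strategy alters the decision distribution, and by Solution Preference no coalition prefers a deviation risking an erroneous outcome, so the protocol is in equilibrium.

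The hard part will be condition~(3) of the forward direction: forcing each agent to learn every \emph{individual} input rather than only its parity. Fixing an agent $i$, determinism together with Theorem~\ref{th_xor} immediately gives that any two vectors in $S_{-i}$ producing the same view for $i$ have equal XOR, so they can differ only in an even number of coordinates. The obstacle is that such same-parity ambiguity is \emph{not} ruled out by the value of $D_i$ alone, and—because a coalition switching between two equal-XOR inputs cannot change the decision—it is not ruled out by $(n-1)$-resilience against the single coalition $V \setminus \{i\}$ either, so Lemma~\ref{lemma_out_in} does not close the gap directly. This is exactly where the precise definition of \EDP{} (deferred in the statement) must do the work: I would formulate \EDP{}'s resilience so that it demands precisely the per-coordinate indistinguishability the consensus equilibrium can supply, and then finish the reduction either by arguing that the full transcript (not merely the decision) is forced to encode each input, or by driving the consensus protocol so that each input is isolated in turn. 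Pinning down this definition and the accompanying simulation argument is the crux of the theorem.
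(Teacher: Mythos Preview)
Your backward direction and conditions (1)--(2) of the forward direction match the paper. The gap is precisely where you place it, condition~(3), and your diagnosis of the obstacle is accurate: two vectors in $S_{-i}$ with equal parity cannot be separated by $D_i$, and switching between them gives the coalition $V\setminus\{i\}$ no advantage, so $(n-1)$-resilience does not directly force $i$'s view to be injective on $S_{-i}$. But your proposed remedies---adjusting the \EDP{} definition to whatever consensus happens to supply, or ``driving the protocol so that each input is isolated in turn''---are not concrete, and the first would make the equivalence vacuous.

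The paper's argument is different from both of your sketches and, crucially, does \emph{not} invoke equilibrium at this step; it uses only \emph{correctness of the XOR computation}, applied to an object richer than $i$'s own inbox. Define $Aff_{(i,R)}$ to be the set of all messages in run $R$ whose causal effect eventually reaches $i$ (Definitions~\ref{def:AgentsAffectedByMessage}--\ref{def:MessagesAffectingAgent))}). Theorem~\ref{thm:inputEncoding} shows that $I_i$, the decision value, and $Aff_{(i,R)}$ together determine every $I_j$: reduce to three agents $\{i,j,k\}$ by clustering $V\setminus\{i,j\}$ into a virtual $k$; if two runs $R_1,R_2$ share $I_i$, the XOR, and $Aff_{(i,\cdot)}$ but differ in $I_j$ (hence also in $I_k$), form the hybrid $R_3$ with $I_i,I_k$ from $R_1$ and $I_j$ from $R_2$. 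This $R_3$ has a \emph{different} XOR, yet the hypothesis $Aff_{(i,R_1)}=Aff_{(i,R_2)}$ forces that once $j$ (or $k$) first diverges from its reference run it can no longer influence $i$, even indirectly, so $i$'s experience in $R_3$ coincides with $R_1$ and $i$ outputs the wrong value---contradicting correctness. The \EDP{} algorithm $\tilde A$ is then obtained by having every agent piggyback on each outgoing message the buffer of all messages it has received so far (Theorem~\ref{thm:consImpliesRKS}); at termination $i$ holds $Aff_{(i,R)}$ and decodes all inputs. Resilience of $\tilde A$ is inherited because, in the deterministic case, a coalition member is an $i$-knower only once it actually knows $I_i$, and the piggybacking opens no channel the coalition did not already have in $A$ (paths through $i$ stay blocked since $i$ accepts nothing from $i$-knowers). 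Your plan to read the inputs off $i$'s \emph{unmodified} message history bypasses this construction, and the hybrid argument does not go through for that weaker object: it is exactly the assumption on $Aff_{(i,\cdot)}$---not on $i$'s direct view---that seals off $j$'s influence in $R_3$.
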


\subsection{The Resilient Input Sharing Problem}
\label{section_iks}
In the \EDP{} problem, agents in $V$ share their binary inputs while each agent $i$ assumes
$V \setminus \{i\}$ are in a coalition. Intuitively, each agent requires all other agents to commit their inputs before or simultaneously to them learning about its input. The motivation for this requirement is that we consider problems in which (1) all agents compute the same function on the inputs, and (2) if any one input is unknown, then any output in the range of the function is still equally possible  \cite{GTBlocks,GTColor}. Therefore the above requirement ensures that the coalition cannot affect the computation after learning the remaining (honest) agent's input, which is necessary for the computation to reach $(n-1)$-resilient equilibrium.
We use the following definitions:
\begin{itemize}
	
	\item $K_{j}^{t}$- Agent j's knowledge at the beginning of round $t$, including any information the coalition could have shared with it.   
	
	\item Agent $j$ is an $i$-knower($t$)- if at the beginning of round $t$ it can make a 'good' guess about $I_i$, i.e.,
	$ \exists b \in\{0,1\}: P[I_i = b |K_{j}^{t}] > P[I_i = b]$
	
	\item $Know(i,t)$ - the group of all $i$-knowers at the beginning of round $t$.
	In a \EDP{} algorithm, $Know(i,0) = \varnothing$ and $Know(i, \infty) = V \setminus \{i\}$
\end{itemize}

Consider for example the network in Figure~\ref{fig_know}.
At Round $0$, $A$ sends two different messages, whose XOR is its input, to $B$ and $C$.
At Round $1$, $B$ and $C$ can pass these messages to $D$,
even if this would not happen in a correct run. Thus: $Know(A,2) = \{D\}$, and $Know(A,3)=\{B,C,D\}$.

\begin{figure}
	\captionsetup{justification=centering}
	\includegraphics[scale=0.65]{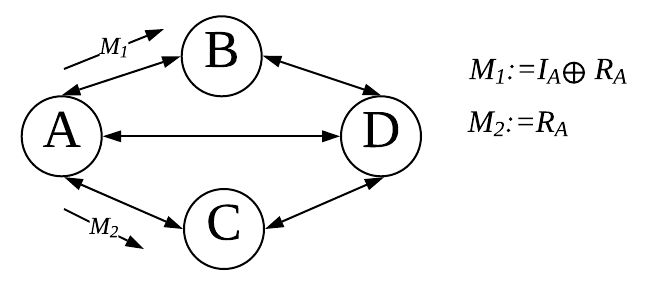}
	\centering
	\caption{Messages sent by agent $A$ at round $0$. $R_A$ is a random number chosen by $A$.}
	\label{fig_know}
\end{figure}

\subsubsection{The RIS Problem}
A solution to the \EDP{} problem satisfies the following conditions:
 
 \begin{enumerate}
 	\item \textbf{Termination} - the algorithm must eventually terminate.
 	\item \textbf{Input-sharing} - at termination, each agent knows the inputs of all other agents.
 	\item \textbf{Resilient} - at any round $t$, Agent $i$ does not receive new information from agents in $Know(i,t)$. % refrain from sending new information to Agent $i$.
 \end{enumerate}

\textbf{Notice}: in a consensus protocol, if $j$ is an $i$-knower($t$), and $j$ can still influence the output at round $t$, then the protocol is not an $(n-1)$-resilient equilibrium.
Thus, in an $(n-1)$-resilient equilibrium for consensus,
no new information can be sent to $i$ from any $i$-knower($t$) at round $t$.

\subsection{The effect of messages in a XOR computation}
\label{section_cons_2_IKS}
We prove that at the end of a distributed XOR computing algorithm, if an agent is given all the chains of messages that have affected its run, it can infer the input of every other agent (Theorem ~\ref{thm:inputEncoding}). This result applies for both deterministic and non-deterministic XOR algorithms.

\textbf{Remark $1$}: In synchronous networks, an agent can pass information to its neighbor through a silent round. Hereafter, every protocol in which informative silent rounds (explained in the proof of Lemma \ref{lemma:inputEncodingSize3} and defined formally in Appendix \ref{section_formal_meaningful}) occur is altered, and a special message \emph{EMPTY} is sent instead on the corresponding link.    

\textbf{Remark $2$}: Hereafter, we consider networks in which every agent knows the topology of the network before the algorithm starts. Otherwise, the coalition could always cheat and choose a topology in which \EDP{} is not possible (for example, 1-connected topology) 

\begin{definition}[Messages recipient]\label{def:RecipientOfMessages))}
	Let $R$ be a run of the protocol and $C \subseteq V$ a group of agents. $$ Recv(C,t,R)=\{i\in V | i
	\text{ received a message from } C \text{ in round } t \text{ of } R\} $$
\end{definition}
 
\begin{definition}[Agents affected by a message]\label{def:AgentsAffectedByMessage}
	In a run $R$, let $m$ be a message sent at round $t_{m}$ to $dst_m$ = agent $j$ from $src_m$. Then:
	\begin{itemize}
		\item $Aff_{(m,R,t_{m})}$ = $\{j\}$ - Agent $j$ is directly affected by $m$.
		\item $\forall k>0: Aff_{(m,R,t_{m}+k)}$ = $Aff_{(m,R,t_{m}+k-1)}$ $\cup$ Recv($Aff_{(m,R,t_{m}+k-1)}$,R,$t_{m}$+k) - Agents that were recursively affected by $m$.
		%\item $Aff_{(m,R)}$ = $Aff_{(m,R,T_{end})}$ ($T_{end}$ is the last round of $R$)
	\end{itemize}
\end{definition} 

$Aff_{(m,R,t)}$ illustrates that a message may affect more than just its recipient; Its potential effect propagates through the network, reaching different agents through other messages.

\begin{definition}[All the (chains of) messages that have an effect on agent $i$ in run $R$]\label{def:MessagesAffectingAgent))}
	\quad
	\begin{itemize}
		\item $Aff_{(i,R)}$ = $\{$ $<m,t_m,src_m,dst_m>$, $m$ sent in $R\ |i\in Aff_{(m,R,T_{end})}\}$ ($R$ terminates at $T_{end}$)
	\end{itemize}
\end{definition}

\begin{theorem}[The encoding of all inputs]\label{thm:inputEncoding}
	Let R be a run of a distributed XOR computing algorithm. Let $i,j\in V$,
	Agent $i$ can compute $I_j$ from the following information:
	\begin{enumerate} 
		\item $I_{i}$ - its input.
		\item Decision value i.e., the XOR of all inputs.
		\item $Aff_{(i,R)}$ - all the messages in $R$ that have an effect on Agent $i$.
	\end{enumerate}
\end{theorem}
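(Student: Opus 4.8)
The plan is to prove Theorem~\ref{thm:inputEncoding} as an \emph{indistinguishability} statement rather than by exhibiting an explicit decoder: it suffices to show that the triple $(I_i, d, Aff_{(i,R)})$ — where $d=\bigoplus_{k}I_k$ is the decision value — already pins down $I_j$. Concretely, I would show that any two runs $R,R'$ of the (possibly non-deterministic) XOR algorithm that agree on agent $i$'s input, on the decision value, and on the whole causal transcript $Aff_{(i,R)}=Aff_{(i,R')}$ must also agree on $I_j$ for every $j$; equivalently, $I_j$ is a well-defined function of the given data. The leverage for this is that $Aff_{(i,R)}$ is far richer than just the messages $i$ receives: by Definitions~\ref{def:AgentsAffectedByMessage} and~\ref{def:MessagesAffectingAgent))} it records, with contents, timestamps, sources and destinations, \emph{every} message in $i$'s causal cone, including messages exchanged between other agents that eventually propagate to $i$. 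Since Remark~1 has already normalized away informative silent rounds by inserting explicit \emph{EMPTY} messages, the transcript is a faithful record of that cone with no side information hidden in missing messages, and by Remark~2 agent $i$ also knows the protocol $\Phi$ and the topology $G$.

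The engine of the argument is an induction mirroring Lemma~\ref{lemma:inputEncodingSize3}: starting from the short-chain / three-agent base case and peeling the causal DAG backward from $i$. The key global fact is that a \emph{correct} XOR algorithm outputs $\bigoplus_k I_k$ on \emph{every} input assignment and every realization of the randomness, so the decision value $d$ supplies exactly one parity constraint that the transcript by itself does not fix. Walking the DAG backward, each agent $g$ appearing in $Aff_{(i,R)}$ has its recorded incoming and outgoing messages available; together with $\Phi$, $G$, and the inputs of the agents already peeled, these let me replay $g$'s local behaviour and isolate the contribution of $I_g$ to the computation. Once the contributions in the cone are resolved, the single global constraint $d$ closes the system, yielding $I_j = d \oplus \bigoplus_{k\neq j} I_k$, which is what the theorem asserts $i$ can compute.

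The technical heart, and the step I expect to be the main obstacle, is \emph{non-determinism}: an agent such as $A$ in Figure~\ref{fig_know} may inject a private random value $R_A$ and transmit $R_A$ on one link and $I_A\oplus R_A$ on another, so that no single message reveals $I_A$ while the pair does. I must therefore argue not that any particular $\rho_g$ is recoverable, but that the causal cone always contains \emph{enough} of these fragments for the random parts to cancel — i.e.\ that whatever an agent's randomness contributes to the messages reaching $i$ is itself pinned down by the messages already in $Aff_{(i,R)}$. This is precisely why the recursive (rather than per-message) definition of $Aff$ is essential: a message's effect must be followed through every downstream message it is combined into, or a masking term could escape the bookkeeping. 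A secondary, milder subtlety is that $i$ may use its own coins $\rho_i$ when decoding, so the indistinguishability claim is really made at fixed $\rho_i$; since $i$ knows its own coins, this does not weaken the conclusion, and it is the reduction to a XOR-computing algorithm via Theorem~\ref{th_xor} that guarantees the object we are decoding is the right one.
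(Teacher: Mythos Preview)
Your indistinguishability framing is exactly right, and it is how the paper proves the base case (Lemma~\ref{lemma:inputEncodingSize3}): assume two runs agree on $(I_i,\,d,\,Aff_{(i,R)})$ but differ in $I_j$, build a hybrid run, and force $i$ to output incorrectly. Where your plan diverges from the paper is in the passage from three agents to $n$. You propose to ``peel the causal DAG backward from $i$'' and, at each peeled agent $g$, ``replay $g$'s local behaviour and isolate the contribution of $I_g$.'' The paper does nothing of the sort. It performs a \emph{single} clustering reduction: fix $i$ and $j$, lump all of $V\setminus\{i,j\}$ into one virtual agent $k$ whose input is the XOR of their $n-2$ inputs, and observe that the original protocol (with ids attached to messages so $k$ can simulate the $n-2$ agents internally) is still a correct XOR algorithm on the three-node network $\{i,j,k\}$. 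Lemma~\ref{lemma:inputEncodingSize3} then applies verbatim, and since $Aff_{(i,R)}$ in $G$ contains $Aff_{(i,R')}$ in the clustered network $G'$, agent $i$ can compute $I_j$.

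The gap in your plan is precisely the step you yourself flag as the main obstacle. You assert that ``the causal cone always contains enough of these fragments for the random parts to cancel,'' but you give no argument for it, and the per-agent replay you describe does not deliver it: $Aff_{(i,R)}$ records only those outgoing messages of $g$ that eventually propagate to $i$, so two pairs $(I_g,\rho_g)$ and $(I_g',\rho_g')$ can be indistinguishable on the recorded messages while differing on the unrecorded ones. Your claim that ``each agent $g$ appearing in $Aff_{(i,R)}$ has its recorded incoming and outgoing messages available'' is therefore false in general, and disentangling inputs from randomness agent by agent would require structural control over how any correct XOR protocol may spread masking terms across links --- control you have not supplied. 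The clustering trick dissolves this difficulty entirely: all randomness of $V\setminus\{i,j\}$ becomes internal to the single virtual agent $k$, so there is nothing to cancel and only one hybrid run to construct, exactly as in the three-agent lemma.
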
  
To prove Theorem \ref{thm:inputEncoding}, assume the following base case is correct (to be proved in the sequel):
\begin{lemma}\label{lemma:inputEncodingSize3}
	Theorem~\ref{thm:inputEncoding} is correct for a network of size $3$, $V=\{i,j,k\}$.
\end{lemma}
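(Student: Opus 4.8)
The plan is to prove the slightly stronger statement that the triple $(I_i, d, Aff_{(i,R)})$, where $d = I_i \oplus I_j \oplus I_k$ is the decision value, \emph{uniquely} determines $I_j$; once uniqueness is established, the decoding function claimed in Theorem~\ref{thm:inputEncoding} is automatically well-defined. First I would pin down the topology: since $n=3$ and $G$ is $2$-vertex-connected, the only possibility is the triangle, so $i$ communicates directly with both $j$ and $k$, and $Aff_{(i,R)}$ may in addition contain $j \leftrightarrow k$ messages whose effect propagates to $i$. Because $i$ is handed $I_i$ and $d$, it already knows $I_j \oplus I_k$; hence it suffices to pin down a single one of $I_j,I_k$, and flipping exactly one of them (while keeping the other) is precisely what breaks the XOR.

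For the main argument I would use a crossover (hybrid) construction. Suppose, towards a contradiction, that uniqueness fails: there are two runs $R_1,R_2$ of the same XOR-computing algorithm with identical $I_i$, identical $d$, and identical $Aff_{(i,R)}$, yet $I_j(R_1)\neq I_j(R_2)$. Since $d$ and $I_i$ agree, the pair $(I_j,I_k)$ must be $(b,c)$ in $R_1$ and $(\bar b,\bar c)$ in $R_2$. I would splice the two executions into a single legal run $R_3$ on inputs $(I_i,b,\bar c)$ — agent $j$ behaving as in $R_1$ and agent $k$ as in $R_2$ — whose correct decision is $I_i \oplus b \oplus \bar c = d \oplus 1$. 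If I can guarantee that $i$'s local view in $R_3$ coincides with its view in $R_1$ (equivalently $R_2$), then $i$ is forced to output $d$ in $R_3$, contradicting termination with the correct XOR $d \oplus 1$. To keep $i$'s view fixed I would replay the tuples of $Aff_{(i,R)}$ round by round, using $Aff_{(i,R_1)} = Aff_{(i,R_2)}$ to argue inductively that along every channel carrying information that reaches $i$, agents $j$ and $k$ send identical messages in both runs, so $R_3$ reproduces $i$'s affecting-message set unchanged.

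The hard part is exactly the consistency of this splice: $I_k$ can influence $i$ \emph{indirectly} through $j$ (via a chain $k \to j \to i$), so fixing $i$'s directly received messages does not obviously decouple the behaviors of $j$ and $k$. I would resolve this by the round-by-round induction on the causal cone $Aff_{(i,R)}$, carefully separating messages that affect $i$ — which the hypothesis forces to be equal in $R_1$ and $R_2$ — from those that do not, which by the definition of $Aff_{(i,R)}$ may be chosen freely in $R_3$ without perturbing $i$'s view. The second delicacy is specific to synchronous networks: a silent round itself transmits information to $i$, and this is precisely what the normalization in Remark~1 handles, since replacing each informative silent round by an explicit \emph{EMPTY} message ensures that every information-bearing channel into $i$ appears as a tuple of $Aff_{(i,R)}$, so the replay misses nothing. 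Non-determinism poses no obstacle: I fix each agent's random choices to their values in the originating run, and since the whole argument is phrased over message transcripts, it is insensitive to whether the algorithm is deterministic. Assembling these pieces yields the contradiction, hence uniqueness, proving Lemma~\ref{lemma:inputEncodingSize3}.
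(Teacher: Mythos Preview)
Your proposal is correct and follows essentially the same approach as the paper. Both set up the identical contradiction hypothesis (two runs $R_1,R_2$ agreeing on $I_i$, the decision value, and $Aff_{(i,R)}$, but differing on $I_j$, hence also on $I_k$), build a hybrid run on the ``crossed'' input triple, and argue that $i$'s view survives the splice so that $i$ outputs the wrong XOR; the only cosmetic difference is the symmetric choice of which agent is taken from which run (you take $j$ from $R_1$ and $k$ from $R_2$, the paper does the reverse), and your causal-cone induction together with the \emph{EMPTY}-message normalization corresponds exactly to the paper's Observations~1 and~2.
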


\begin{proof}[Proof of Theorem~\ref{thm:inputEncoding}]
	Let $G=(V,E)$ be a network where $n > 3$, such that $i,j \in V$.
	Create a new network $G'$ in which agents $i$ and $j$ are as in $G$, but all other agents in
	$V \setminus \{i,j\}$ are clustered into one 'virtual' agent $k$.
	A distributed XOR algorithm for $G'$ is:
	\begin{itemize}
		\item Agent $k$ chooses $n - 2$ bits such that the XOR of these bits is its $I_k$.
		\item Agents $i$ and $j$ behave in $G'$ as if they were in $G$, explicitly attaching to each message the id of its destination, while $k$ emulates the behavior of the other $n - 2$ agents in $V$, attaching to each message the id of its source.
	\end{itemize}
Let $I_i^R$ and $D_i^R$ be the input and output of $i$ in run $R$. For any run $R$ of the algorithm in $G$, $\exists R'$ - a run of the algorithm in $G'$ s.t.,:	\textbf{(1)} $I_i^R$=$I_i^{R'}$ , $I_j^R$=$I_j^{R'}$, \textbf{(2)} $D_i^R$=$D_i^{R'}$ and \textbf{(3)} $Aff_{(i,R)} \supseteq Aff_{(i,R')}$.\\
	From lemma~\ref{lemma:inputEncodingSize3} we know that from $D_i^{R'}$, $I_i^{R'}$ and $Aff_{(i,R')}$,  $I_j'$ can be computed. Therefore:
	$\forall i\neq j\in V$: - $D_i^{R}$,  $I_i^R$ and $Aff_{(i,R)}$ are enough to compute $I_j^R$.
\end{proof}
%It remains to prove lemma ~\ref{lemma:inputEncodingSize3}.
\begin{proof}[Proof of ~\ref{lemma:inputEncodingSize3}]
	$V = \{i,j,k\}$. Assume towards a contradiction that $\exists R_1, R_2$, two runs of the algorithm such that
	\begin{enumerate}
		\item $I_i^{R_1} $ = $ I_i^{R_2}$ - Agent $i$'s inputs in $R_1$ and $R_2$ are the same.
		\item $\bigoplus_{l \in V} I_l^{R_1}=\bigoplus_{l \in V} I_l^{R_2}$ - The decision value is the same in both $R_1$ and $R_2$.
		\item $Aff_{(i,R_1)}$ = $Aff_{(i,R_2)}$ - Exactly the same set of messages affect $i$ in both runs.
		\item $I_j^{R_1}\neq I_j^{R_2}$ - Agent $j$'s input in $R_1$ is different than in $R_2$.
	\end{enumerate}
	Clearly from 1, 2, and 4 it must be that $I_k^{R_1}\neq I_k^{R_2}$.

	Towards a contradiction we construct run $R_3$, in which  $i$'s and $k$'s inputs are the same as in $R_1$ and $j$'s input is the same as in $R_2$, but the decision value (XOR) in $R_3$ is the same as in $R_1$.
	
	In $R_3$, agents $i$ and $k$ start to perform their steps according to $R_1$ until the first round in which $i$ or $k$ receive a message that either does not receive in that round in $R_1$. Agent $j$ behaves the same as in $R_2$, until the first round, denoted round $T-1$, in which it receives a message $m$ it does not receive in that round in $R_2$. Notice that it is legal for all agents to act this way in round 0. Further, if $i$ and $k$ can continue according to $R_1$ and $j$ can continue according to $R_2$ until termination, then $i$ outputs the same value as it would in $R_1$, which is incorrect for $R_3$.
	
	\begin{description}
		\item[Observation $1$] From round $T$ until termination
		$j$ cannot send messages to $i$ in either $R_1$ or $R_2$
		or otherwise, $m$'s effect would propagate to $i$, causing - $Aff_{(i,R_1)}\neq Aff_{(i,R_2)}$, contradicting point $3$ of the assumptions.
		\item[Observation $2$] Similarly from round $T$ until termination, $j$ cannot send messages to $i$ in $R_3$ or otherwise, let
		$t\geq T$ be the first round (after $T$) of $R_3$
		in which $j$ sends a message to $i$.
		In $R_1$ - $j$ does not send a message to $i$ in round $t$ (see Observation $1$).
		This means that this silent round $t$ of $R_1$ between $j$ and $i$ is informative
		(it tells $i$ that the run is $R_1/R_2$ and not $R_3$).
		Since we do not allow informative silent rounds (see Remark $1$), we reach a contradiction.
	\end{description}
	Notice that by point $3$ in the assumptions, after $T$ $j$ cannot even communicate with $i$ through $k$,
	since $m$'s effect would propagate to $i$ through $k$.
	From the two observations above, from round $T$ of $R_3$,
	$j$ cannot communicate with $i$, and from $i$'s perspective, $j$ is running $R_1$.
	The same logic applies for $k$ - the first round in which it is illegal for $k$ to act according to $R_1$,
	is a round after which $k$ cannot send messages to $i$ (even not through $j$).
	Thus $i$'s experience throughout $R_3$ is the same as in $R_1$,
	resulting in $i$ making an incorrect output.  Contradiction.	  
\end{proof}

\subsection{Deterministic $(n-1)$-resilient Consensus implies \EDP{}, completing the proof}
\label{section_obs_det_cons}
 In a deterministic synchronous binary consensus protocol, in which all agents start at the same round, for each input vector the run of the algorithm is fully determined.

Let us look at a network running some deterministic binary consensus,
with agent $i\in V$ and coalition $V \setminus \{i\}$.
Intuitively, agents in the coalition can choose in advance an input vector to be used in the algorithm. Thus, from the coalition's perspective, there can be only two possible runs - $R_0$ in which $I_i=0$, and $R_1$ in which $I_i=1$. For each agent in the coalition, there is the first round in which $R_0$ and $R_1$ differ, at that point this agent knows $I_i$. Thus, each agent in the coalition is in one of two states - knows nothing about $I_i$ or knows $I_i$, this is in contrast to non-deterministic algorithms, see for example Figure~\ref{fig_know}.

Below we transform any deterministic $(n-1)$-resilient equilibrium for binary consensus into a deterministic \EDP{}. In Appendix ~\ref{section_non_det} the difficulties in the non-deterministic case are explained.  
\begin{theorem}
	\label{thm:consImpliesRKS}
	If there exists a deterministic $(n-1)$-resilient equilibrium for binary consensus,
	$A$ on network $G=(V,E)$ then there exists an algorithm $\tilde{A}$ for \EDP{}, on $G$.
\end{theorem}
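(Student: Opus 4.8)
The plan is to build $\tilde{A}$ directly on top of $A$. By Theorem~\ref{th_xor} the equilibrium consensus $A$ decides the XOR of all inputs, so $A$ is a distributed XOR computing algorithm and, by termination of consensus, it halts on every input. The idea is to run $A$ and, piggybacked on its execution, have every agent relay to its neighbours the transcript of messages it has observed, so that by the time the combined protocol halts each agent $i$ possesses $Aff_{(i,R)}$ --- the chains of messages that affect it in the underlying run $R$ of $A$. Agent $i$ then decodes locally: from $I_i$, the decided value $D_i$ (the global XOR), and $Aff_{(i,R)}$, Theorem~\ref{thm:inputEncoding} guarantees that $i$ can reconstruct $I_j$ for every $j \neq i$. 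Each agent outputs this reconstructed input vector.

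With this construction the three RIS requirements are checked in turn. \textbf{Termination} is immediate: $A$ halts, and the relaying phase adds only finitely many rounds. \textbf{Input-sharing} is exactly the conclusion of Theorem~\ref{thm:inputEncoding} applied at termination, once $i$ holds $Aff_{(i,R)}$. The delicate condition is \textbf{Resilient}: I must show that at every round $t$ no agent in $Know(i,t)$ delivers new information to $i$. Here I would invoke the equilibrium of $A$ together with the observation following the RIS definition: if some $i$-knower at round $t$ could still influence the value $i$ decides, the coalition $V\setminus\{i\}$ --- having already learned $I_i$ --- could steer the XOR away from its true value, contradicting both Theorem~\ref{th_xor} and the $(n-1)$-resilient equilibrium of $A$. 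Hence in $A$ no $i$-knower affects $i$'s decision after it becomes a knower, and since (by Theorem~\ref{thm:inputEncoding}) the information $i$ uses to recover the inputs is precisely what determines its decision, no $i$-knower contributes new input information to $i$.

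To make the knower/non-knower bookkeeping precise I would lean on the deterministic structure described in Section~\ref{section_obs_det_cons}: from the coalition's viewpoint there are only the two runs $R_0$ (with $I_i=0$) and $R_1$ (with $I_i=1$), and an agent $j$ enters $Know(i,\cdot)$ exactly at the first round in which $R_0$ and $R_1$ diverge at $j$. This two-state picture lets me pinpoint, for each potential relay edge into $i$, whether its sender is already a knower, and to argue that every message carrying genuinely new input information reaches $i$ through a chain that was committed before its originator learned $I_i$.

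The step I expect to be the main obstacle is the Resilient property, and specifically its interaction with the relaying we add to deliver $Aff_{(i,R)}$. Naive flooding would let agents who already know $I_i$ forward fresh transcript data to $i$, which risks violating resilience; the work is to schedule or filter the relaying so that $i$ obtains the full $Aff_{(i,R)}$ while every piece of information it receives either predates the sender becoming an $i$-knower or, by the equilibrium of $A$, cannot influence $i$'s recovered inputs. Equivalently, one must reconcile the abstract object $Aff_{(i,R)}$ with what $i$ can concretely be handed under the resilience constraint, and this is where the equilibrium hypothesis on $A$ and the no-informative-silent-round convention (Remark~$1$) do the real work.
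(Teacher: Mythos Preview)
Your construction and the reductions to Theorems~\ref{th_xor} and~\ref{thm:inputEncoding} match the paper, and termination and input-sharing go through exactly as you say. The genuine gap is in the resilience argument, and it is not where you locate it.

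You frame the difficulty as: an $i$-knower might, via the relaying, hand fresh transcript data to $i$, so one must ``schedule or filter'' the relaying. But the paper's construction is precisely the naive piggybacking you describe (append the running buffer to every outgoing message of $A$; no extra rounds, no new edges activated), and it is resilient without any filtering. The reason is the step you are missing: one must compare $Know(i,t)$ in $\tilde{A}$ to $Know(i,t)$ in $A$ and show the former is contained in the latter. Since the piggybacking does not change which neighbours send to $i$ in any round, and since in the deterministic equilibrium $A$ agent $i$ already receives nothing from members of $Know(i,t)$ at round $t$, resilience of $\tilde{A}$ follows once you know the knower sets do not grow. That containment is argued by a path analysis: any extra information the coalition could push to some $j$ in $\tilde{A}$ along paths avoiding $i$ it could already push in $A$ (so $j$ was already a knower there), and paths through $i$ are unusable because $i$ refuses messages from $i$-knowers and hence cannot be used as a relay for knower-originated data.

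Two concrete corrections to your write-up: (i) do not speak of a separate ``relaying phase'' adding rounds --- the buffer must ride only on the messages $A$ already sends, otherwise you do create new communication from knowers to $i$ and the argument breaks; (ii) your sentence ``since the information $i$ uses to recover the inputs is precisely what determines its decision, no $i$-knower contributes new input information to $i$'' is not the right inference --- the resilience condition is about \emph{any} new information reaching $i$ from knowers, not just decision-relevant information, and the way to discharge it is the $Know$-set containment above, not an appeal to what influences $D_i$.
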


\begin{proof}
	In $\tilde{A}$, each agent $i$ runs $A$ with the following modifications:
	\begin{itemize}
		\item For each message $m$ that $i$ receives, $i$ appends $<m,src_{m},dst_{m},t_{m}>$ to a local buffer $B$ of messages that has affected it.
		\item Agent $i$ appends $B$ to each message it sends.
		\item Agent $i$ adds to $B$ all the information piggy-bagged on incoming messages.
	\end{itemize}
	In this new algorithm $\tilde{A}$, every message propagates in the network, reaching all the agents it affects. By the end of the algorithm, the buffer maintained by agent $i$ contains $Aff_{(i,R)}$, where $R$ is the run of the original consensus protocol $A$. By theorem~\ref{th_xor}, $A$ is a XOR computing protocol, and by theorem~\ref{thm:inputEncoding}, $i$'s buffer contains enough information to infer all inputs. Thus $\tilde{A}$ is an \EDP{} protocol.
	
	It remains to prove that $\tilde{A}$ is resilient. An input sharing protocol is resilient (Subsection~\ref{section_iks}) if at any round $t$, $i$ does not receive new information from agents in $Know(i,t)$. As stated before, this demand applies for $(n-1)$-resilient equilibrium for binary consensus as well.
	Thus, to show that $\tilde{A}$ is resilient, it is enough to show that $\forall i\in V$:
	\begin{itemize}
		\item In each round $t$ of $\tilde{A}$, $i$ receives messages from the same neighbors it receives from in $A$
		\item In each round $t$ of $\tilde{A}$, $\forall j\neq i$: $j\in Know(i,t)$ in $\tilde{A}$ $\implies$ $j\in Know(i,t)$ in $A$  
	\end{itemize}
	The first point is immediate from the construction of $\tilde{A}$.
	For the second point - observe some agent $j$ at round $t$ of $A$, which is not an $i$-knower in $A$. For $j$ to become an $i$-knower($t$) in $\tilde{A}$, the coalition must send $j$ enough information by $t$ for it to make a 'good' guess about $I_i$.
	There are two kind of paths in $G$ by which the coalition can send information to $j$ - paths that do not pass through $i$, and paths that do.
	
	Through paths not including $i$, the coalition can pass information in the same pace for both $A$ and $\tilde{A}$. Since $j\notin Know(i,t)$ in $A$, using these paths alone is not enough to make $j$ an $i$-knower($t$) in $\tilde{A}$.
	Regarding paths that include $i$ - as argued in the beginning of this subsection, in a \emph{deterministic} $(n-1)$-resilient equilibrium for binary consensus, if a member of the coalition has any information about $I_i$, then that member \emph{knows} $I_i$. Therefor, in $A$, $i$ should not receive messages from members of $Know(i,t)$ at round $t$. Thus if the coalition has information it wants to pass to $j$, it cannot do so using paths including agent $i$, since $i$ does not accept and propagate messages from $i$-knowers.
	To conclude, if $j$ is an $i$-knower in $\tilde{A}$, $j$ is an $i$-knower in $A$. Since $A$ is $(n-1)$-resilient equilibrium for consensus, $\tilde{A}$ is resilient as well.   
\end{proof}

%We can finally prove the necessary and sufficient conditions from Theorem~\ref{th_iff}: 

\subsubsection{Completing the proof, necessary and sufficient conditions for deterministic Consensus}
\label{section_proof}

\begin{proof}[{\bf Proof of Theorem~\ref{th_iff}$\Leftarrow$}]
	Assume that the $3$ conditions are realized, and let us suggest a simple
	$(n-1)$-resilient equilibrium for binary consensus:
	run the \EDP{} algorithm and output the XOR of all inputs.
	Since the \EDP{} algorithm is resilient, no coalition has an incentive to cheat.
\end{proof}

\begin{proof}[{\bf Proof of Theorem~\ref{th_iff}$\Rightarrow$}]
	Assume that $(n-1)$-resilient equilibrium for binary consensus exists.
	By \ref{cor_noeven} and \ref{th_uniform},
    $n$ is odd and the input distribution is uniform.
	By theorem~\ref{thm:consImpliesRKS}, \EDP{} is possible.
\end{proof}

\section{Discussion}
\label{section_discussion}

Surprisingly, while there is an equilibrium for binary consensus resilient to coalitions of $n-1$ agents, no such equilibrium exists for multi valued consensus.
This is the first model we know of in which there is
a separation between binary and multi valued consensus.
Intuitively, this is because a coalition with a preference towards $v$ has an incentive to cheat
and act as if the input of all agents in the coalition is $v$,
thus lowering the number of possible decision values (due to validity) to two values, at most.
Consider for example the standard bit-by-bit reduction from binary to multi valued consensus,
the probability to decide $v$ is now at least $\frac{1}{2}$
instead of $\frac{1}{r}$, since the decision value is determined by the
decision on the first bit of the coalition input that differs from the input of the honest agent.
We conjecture that this intuition holds even for smaller coalitions, up to a single cheater. The results in $\S \ref{section_necessary}$ and $\S \ref{section_multi}$ hold regardless of the network topology, scheduling models,
or cryptographic solutions,
as they are based solely on the input values  and utility of the agents.

Furthermore, we present necessary and sufficient conditions for
$(n-1)$-resilient equilibrium for binary \emph{deterministic} consensus using the
resilient input sharing (RIS) problem.
This in fact means that an agent cannot hide its input from the rest of the network
in any $(n-1)$-resilient equilibrium protocol that computes XOR,
%\fix{ even if the agents of the rest of the network does not fully cooperate, and are willing to share with each other only messages sent throughout the execution (and not their full inputs).}
i.e., even though we only compute the XOR of inputs,
at the end of the protocol all agents can deduce the input values of all other agents.

There are several open directions for research:
\begin{itemize}
	\item Extending the equivalence result to \emph{non-deterministic} consensus and RIS.
	
	\item Can binary consensus be solved without the conditions of even size and uniform input
	for coalitions of a smaller size, such as $n-2$ or $\frac{n}{2}$?

	\item Does an equilibrium for multi-valued consensus exist for coalitions of size $n-2$ or less?
\end{itemize}

%\bibliography{gt_cons}
\clearpage

\clearpage
\appendix

\section{$(n-2)$-resilient Consensus for Even $n$}
\label{section_res_even}
%\fix{
In \cite{GTBlocks} the authors provide a different protocol for even and odd size networks.
Here we prove that the protocol suggested for binary consensus when $n$ is even,
provides $(n-2)$-resilient equilibrium for binary consensus.
The protocol assumes the existence of an $(n-2)$-resilient equilibrium for knowledge-sharing in order to perform an $(n-2)$-resilient equilibrium for leader election (notice that in \cite{YifrachLeader}, it is shown that in an asynchronous ring, the leader election algorithm of \cite{GTBlocks} is not resilient to coalitions of size $O(\sqrt{n})$).
Further, the protocol assumes each agent has a unique id, and all agents start the protocol at the same round.

\begin{algorithm}
	\caption{\cite{GTBlocks} Protocol for $(n-2)$-resilient equilibrium for binary consensus}
	\label{proto_even}
	for each agent $i$:
	\begin{algorithmic}
		\State 1. Let $r_i=random(1,...,n)$
		\State 2. Execute Knowledge sharing \cite{GTBlocks} to learn $K = \{<I_1,r_1,id_1>,...<I_n,r_n,id_n>\}$
		\State 3. For any $k$, if $I_k \notin \{0,1\}$ or $r_k \notin \{1,...,n\}$, or $\exists j$ such that $id_j=id_k$,  set $D_i=\perp$ and terminate
		\State 4. Calculate $L = (\sum_{k=1}^{n} r_k) \mod n$, set $leader$ to be the L-th ranked id 
		\State 5. Set $D_i = \bigoplus\limits_{\substack{k \in \{1,...,n\}\\ id_{k}\neq leader}} I_k$ and terminate
	\end{algorithmic}
\end{algorithm}

Essentially, the protocol suggested in \cite{GTBlocks} when $n$ is even, performs input sharing in parallel to leader election, then outputs the XOR of all inputs without the leader's input.
It is easy to see that this protocol for consensus satisfies agreement, validity, and termination.

For the rest of this section, let $V$ be the agents in a network of even size,
executing the protocol in Algorithm~\ref{proto_even}, with coalition $C=V\setminus\{i,j\}$.
Also, we assume the input distribution is uniform, i.e., $\forall i\in V: P[I_i = 0] = P[I_i=1] = \frac{1}{2}$.

\begin{theorem}
	\label{th_equil_even}
	Algorithm \ref{proto_even} is an $(n-2)$-resilient-equilibrium for binary consensus.
\end{theorem}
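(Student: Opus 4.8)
The plan is to show that for the coalition $C = V \setminus \{i,j\}$, no deviation raises the probability of a legal decision on its preferred value $v$ above $\frac{1}{2}$, which is exactly the payoff of honest execution; matching this bound proves Algorithm~\ref{proto_even} is in $(n-2)$-resilient equilibrium. First I would invoke Solution Preference to discard every deviation whose outcome is erroneous, since such outcomes never beat a legal one: it therefore suffices to bound the probability of a \emph{legal} decision on $v$. In any legal outcome the two honest agents $i$ and $j$ agree on a common leader $\ell$ and a common vector of claimed inputs, and both output $O = \bigoplus_{k \neq \ell} I_k$; inconsistent messages from the coalition can only cause the honest agents to abort (step~3) or to disagree, both of which are erroneous and hence already discarded.

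The crux is the interaction with the knowledge-sharing building block, which is assumed to be an $(n-2)$-resilient equilibrium. I would use its resilience to argue that the coalition must fix its effective inputs $\{I_k : k \in C\}$ and random values $\{r_k : k \in C\}$ before it gains any information about the honest values $I_i, I_j, r_i, r_j$; otherwise the coalition would be deviating profitably inside the knowledge-sharing sub-protocol, contradicting its equilibrium. Consequently $I_i$ and $I_j$ remain uniform and independent conditioned on every choice the coalition makes, and the leader $\ell$, determined by $L = (\sum_k r_k) \bmod n$ as in step~4 and thus a function of the $r$ values alone, is independent of the honest inputs $(I_i, I_j)$.

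The main step is then a short case analysis conditioned on the identity of $\ell$. Because the XOR in step~5 omits exactly one input, that of the single leader, and because there are \emph{two} honest agents, at least one of $I_i, I_j$ always survives in $O$: if $\ell \in C$ then $O = I_i \oplus I_j \oplus \bigoplus_{k \in C \setminus \{\ell\}} I_k$; if $\ell = i$ then $O = I_j \oplus \bigoplus_{k \in C} I_k$; and symmetrically for $\ell = j$. In each case a uniform, independent honest bit is XORed into a value the coalition has already committed, so $P[O = v \mid \ell] = \frac{1}{2}$ for every $\ell$, whence $P[O = v] = \frac{1}{2}$ unconditionally. This is precisely where the bound $n-2$ is essential: with only one honest agent the leader could exclude its input and leave the coalition in full control of the XOR, which is exactly the obstruction behind Corollary~\ref{cor_noeven}.

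I expect the main obstacle to be the second paragraph, namely rigorously extracting the ``commit-before-learning'' independence from the bare fact that knowledge sharing is an equilibrium, and ruling out subtle deviations in which the coalition correlates its revealed inputs with partial information about $I_i, I_j$. The remaining work, verifying agreement, validity, and termination of the honest protocol and checking that aborting or forcing disagreement never helps under Solution Preference, is routine.
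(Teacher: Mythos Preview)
Your case analysis conditioned on the leader's identity is the paper's core observation (its Lemma~\ref{2vsC}) and is correct \emph{once} you know that $i$ and $j$ compute from the same leader and the same vector of claimed inputs. The gap is in how you get there.

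You assert that ``inconsistent messages from the coalition can only cause the honest agents to abort or to disagree,'' and hence that any legal outcome forces a common leader $\ell$ and a common input vector. This is false. If the coalition feeds $i$ one set of $(I_k,r_k,id_k)$ tuples and feeds $j$ a different set, each honest agent still executes step~5 and outputs some bit; those two bits may coincide, and when they do the outcome is legal (agreement, validity and termination all hold). Solution Preference does not let you discard this event, and the $(n-2)$-resilience of the knowledge-sharing sub-protocol does not rule it out either: that property bounds the coalition's \emph{expected utility} from deviating, it does not guarantee that honest agents receive consistent views under deviation. In a topology where every $i$--$j$ path passes through $C$, the coalition can pre-commit (before learning anything) to telling $i$ and $j$ entirely different stories.

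The paper does not discard this case; it analyses it. When the coalition induces different leaders at $i$ and $j$ (Lemma~\ref{diff_leader}), it argues that $D_i$ and $D_j$ become \emph{independent} over the honest randomness, bounds each by $P[D_k=v]\le\tfrac{n+1}{2n}$, and obtains $P[D_i=D_j=v]\le\bigl(\tfrac{n+1}{2n}\bigr)^2<\tfrac12$. A parallel argument (Lemma~\ref{diff_input}) handles the case of a common leader but differing claimed inputs. Only after these two lemmas is the common-view assumption justified, and then your XOR argument applies. What your proposal is missing is precisely this bound on the success probability when the two honest views differ.
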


\noindent Proof of Theorem~\ref{th_equil_even} follows after the following observation and lemmas:

If no agent deviates from the protocol in Algorithm~\ref{proto_even}, then the decision value of the consensus is uniformly distributed. Therefore, if after the coalition $C$ deviates, the probability to decide a value preferred by $C$ is still $\leq 1/2$,  $C$ has no incentive to cheat.

\begin{lemma}
	\label{2vsC}
	If at the end of the knowledge sharing step, $i$ learns the true value of $I_j$ (or vice versa), $C$ has no incentive to deviated from the protocol.
\end{lemma}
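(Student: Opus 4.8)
The plan is to show that, under the hypothesis, the coalition $C=V\setminus\{i,j\}$ can never make the protocol decide its preferred value with probability exceeding $\tfrac12$. By the observation preceding the lemma, honest execution of Algorithm~\ref{proto_even} already decides each value with probability exactly $\tfrac12$ (the input distribution is uniform by assumption, so the XOR-without-the-leader is a uniform bit), and since utilities are $0/1$ by Solution Preference, it suffices to bound $P[\text{outcome legal}\wedge D=v]$ for the value $v\in\{0,1\}$ preferred by $C$. By the symmetry of $i$ and $j$ I assume without loss of generality that $i$ learns the true $I_j$; note in addition that $i$ always knows its own $I_i$ and $j$ always knows its own $I_j$, and that legality in particular forces agreement $D_i=D_j$.

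First I would rewrite Step~5 of the protocol as deciding $D=(\bigoplus_{k\in V} I_k)\oplus I_{\mathrm{leader}}$, so that exactly the leader's input is removed from the global XOR, and split on whether the leader lies in $C$, equals $i$, or equals $j$. The decisive structural input is the resilience of the knowledge-sharing building block of \cite{GTBlocks}: every piece of information that can influence $i$'s computation is committed by the coalition \emph{before} any coalition member becomes an $i$-knower, so all messages the coalition directs at $i$ (the coalition inputs it reports and the $r$-values it contributes) are independent of $I_i$; symmetrically, everything directed at $j$ is independent of $I_j$. In particular, since the leader depends only on the $r$-values and the coalition must fix its own $r$-values before learning $r_i,r_j$, the identity of the leader is independent of the honest inputs and cannot be steered at $i$ or $j$ adversarially.

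The heart of the argument is that in every case at least one honest agent outputs a uniform bit the coalition cannot bias, namely whichever of $i,j$ is \emph{not} the leader. If the leader is in $C$, both honest inputs survive and $D_i=I_i\oplus(\text{terms independent of }I_i)$ is uniform by Theorem~\ref{th_uniform}; similarly $D_j$ is uniform. If the leader is $i$, then $i$'s input is dropped and the coalition, which does know $I_j$, may well force $D_i=v$ — but here $j$ is not the leader, $I_j$ survives in $D_j=I_j\oplus(\text{terms independent of }I_j)$, and $D_j$ is a uniform, unbiasable bit; the symmetric statement holds when the leader is $j$. Thus the non-leader honest agent decides $v$ with probability exactly $\tfrac12$, and since a legal outcome requires $D_i=D_j$, we obtain $P[\text{legal}\wedge D=v]\le\tfrac12$, matching honest play and leaving $C$ with no incentive to deviate. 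The role of the hypothesis is to keep us in the regime where a legal outcome is actually attainable: because $i$ holds the \emph{true} $I_j$ (and its own $I_i$), the honest agents compute from genuine honest inputs, so agreement on a valid value is reachable and the bound via the non-leader agent is the binding one (the complementary situation, where $C$ feeds both honest agents false cross-inputs, is handled separately and leverages the resulting disagreement risk).

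The step I expect to be the main obstacle is turning the informal resilience guarantee into the precise independence claims used above — that the coalition's messages affecting $i$ are independent of $I_i$ and that the leader is independent of $(I_i,I_j)$ — and arguing that these survive an \emph{arbitrary} coalition deviation rather than merely honest execution of the sharing sub-protocol. This rests on the commitment property of \cite{GTBlocks} (all other agents commit before or simultaneously with learning a given agent's input) together with the synchronous, known-topology assumption that lets an honest agent refuse late information from its own knowers. A secondary point to verify is validity in the leader-$i$ (resp. leader-$j$) case: the forced value $v$ must be some agent's genuine input, which $C$ can always ensure by holding an input equal to $v$, so the bound via agreement indeed applies.
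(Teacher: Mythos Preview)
The paper's proof is a two-line observation that stays entirely with $D_i$: since $i$ knows its own $I_i$ and, by hypothesis, the \emph{true} $I_j$, and since Step~5 omits only one agent's input, at least one of the genuine bits $I_i,I_j$ appears in the XOR that $i$ computes. That bit is uniform and (by resilience of the knowledge-sharing primitive) independent of the coalition's contributions, so $D_i$ is uniform and $P[\text{legal}\wedge D=v]\le P[D_i=v]=\tfrac12$. No case split on the leader, and no appeal to $D_j$, is needed.

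Your route is different and has a gap. You split on the leader and, in the case ``leader $=i$'', abandon $D_i$ for $D_j$, arguing that $I_j$ survives in $D_j$. But you are treating ``the leader'' as a single well-defined agent, whereas at this point nothing prevents the coalition from sharing different $r$-values (or ids) with $i$ than with $j$, so that $i$ computes $\mathrm{leader}_i=i$ while $j$ computes $\mathrm{leader}_j=j$; then \emph{both} honest inputs are dropped from their respective XORs and your uniformity claim for $D_j$ fails. Ruling out this ``different leaders'' maneuver is precisely the content of the \emph{next} lemma (Lemma~\ref{diff_leader}), which the paper proves after and independently of Lemma~\ref{2vsC}. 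The simple repair is not to leave $D_i$ at all: your concession that ``the coalition \dots\ may well force $D_i=v$'' overlooks the hypothesis --- the true $I_j$ sits in $D_i$ even when $\mathrm{leader}_i=i$, so under the same commitment property you already invoke, $D_i$ itself is a uniform bit in every case.
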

\begin{proof}
In this case at least one of the inputs of agents $i$  and $j$ is not omitted from the XOR performed by $i$ or/and $j$. Since the coalition has no influence on these inputs, which are uniformly distributed, the result of the XOR is also uniformly distributed, and they have no incentive to cheat.
\end{proof}

Following Observation A.0 and Lemma \ref{2vsC} it remains to consider the case in which the coalition can cheat each of $i$ and $j$ about the input, or id, and/or random value selected in step $1$, of the other.
\begin{lemma}
	\label{diff_leader}
	$C$ has no incentive to share with $i$ and $j$ two sets of ids and random values that disagree.
\end{lemma}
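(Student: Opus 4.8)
The plan is to show that when the coalition feeds $i$ and $j$ inconsistent sets of random values (thereby possibly making them compute different leaders, $leader_i\neq leader_j$), the probability of a \emph{legal} decision on the coalition's preferred value --- which w.l.o.g.\ is $1$ --- is strictly below $\tfrac12$, i.e.\ strictly below what honest play already guarantees the coalition (the honest decision is uniform). Hence a disagreeing report can never be profitable, and by Solution Preference the non-agreeing or invalid outcomes only lower the coalition's utility further.

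First I would pin down the distribution of the leader each honest agent computes. Agent $i$ obtains $L_i=\bigl(r_i+\sum_{k\neq i} r_k^{(i)}\bigr)\bmod n$, where $r_k^{(i)}$ is the random value $i$ uses for agent $k$, and then takes the $L_i$-th ranked id. Because each triple $\langle I_k,r_k,id_k\rangle$ is distributed through the \emph{resilient} knowledge-sharing subroutine, the coalition must commit every value it relays to $i$ before it can learn $r_i$; since $r_i$ is uniform on $\{1,\dots,n\}$ and independent of everything the coalition has committed, $L_i$ is uniform on $\mathbb{Z}_n$, so $leader_i$ is a uniformly random agent and $P[leader_i=i]=\tfrac1n$. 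Symmetrically $P[leader_j=j]=\tfrac1n$, and the two events are independent because $r_i$ and $r_j$ are independent.

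Next I would split on whether each honest agent is excluded from its own XOR. If $leader_i\neq i$ then $D_i=I_i\oplus X_i$ with $X_i$ determined entirely by coalition-controlled bits and independent of the true, uniform $I_i$, so $D_i$ is uniform and independent of everything on $j$'s side; if $leader_i=i$ then $D_i$ is a coalition-chosen constant, in the best case equal to $1$. The same dichotomy holds for $j$. Taking the most favourable coalition strategy in every branch and using the mutual independence of $r_i,I_i,r_j,I_j$, the probability of agreement on $1$ is bounded by
\[
P[D_i=D_j=1]\;\le\;\frac{1}{n}\cdot\frac{n+1}{2n}\;+\;\frac{n-1}{n}\cdot\frac12\cdot\frac{n+1}{2n}\;=\;\frac{(n+1)^2}{4n^2}.
\]
Since $\tfrac{(n+1)^2}{4n^2}<\tfrac12$ exactly when $n^2-2n-1>0$, i.e.\ for every $n\ge 3$, the deviation yields expected utility below the honest value $\tfrac12$, so $C$ has no incentive to send disagreeing sets.

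I expect the main obstacle to be the joint analysis rather than any single branch: inspecting $i$ in isolation gives $P[D_i=1]=\tfrac12+\tfrac1{2n}>\tfrac12$, so the argument genuinely needs the agreement constraint $D_i=D_j$ to force \emph{both} honest outputs to equal $1$ simultaneously. The delicate points are therefore (a) invoking the commit-before-learn resilience of the knowledge-sharing building block precisely enough to conclude that $leader_i$ and $leader_j$ are uniform and independent, and (b) verifying that the coalition cannot correlate the control events $\{leader_i=i\}$ and $\{leader_j=j\}$ with favourable realizations of the honest inputs $I_i,I_j$, which is exactly what the independence of $r_i,I_i,r_j,I_j$ rules out.
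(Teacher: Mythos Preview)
Your proposal is essentially the paper's own argument. Both proofs (i) use the resilience of the knowledge-sharing step to conclude that $leader_i$ is uniform with $P[leader_i=i]=\tfrac1n$, (ii) argue that, in the relevant case, $D_i$ and $D_j$ are independent, (iii) bound each of $P[D_i=v]$ and $P[D_j=v]$ by $\tfrac{n+1}{2n}$ via the split ``$leader=$ self / not self'', and (iv) multiply to obtain $\bigl(\tfrac{n+1}{2n}\bigr)^2=\tfrac{(n+1)^2}{4n^2}<\tfrac12$ for all $n\ge 3$. Your decomposition writes the product out as $\tfrac1n\cdot\tfrac{n+1}{2n}+\tfrac{n-1}{n}\cdot\tfrac12\cdot\tfrac{n+1}{2n}$, but this is algebraically the same bound and the same reasoning.

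One small point of care: your claim that $X_i$ is ``determined entirely by coalition-controlled bits'' (and hence that $D_i$ is independent of everything on $j$'s side) tacitly uses the standing hypothesis, established in the paper via Lemma~\ref{2vsC}, that we are in the case where the coalition can substitute what $i$ learns about $j$'s data; otherwise the true $I_j$ could appear inside $X_i$ and the independence of $D_i$ and $D_j$ would fail. The paper relies on exactly the same reduction, so this is not a gap relative to the paper---just make the dependence on Lemma~\ref{2vsC} explicit when you write it up.
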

\begin{proof}
	Assume $C$ has a preference towards $v$.
	Denote by $X$ the case in which the coalition forced $i$ and $j$ to elect two different leaders. Notice that to achieve this the coalition must provide $i$ and $j$ two different sets of ids and random values for all the other agents.
		
	In case $X$  the decision value of $i$ is independent of the decision value of $j$. 
	Following \cite{GTBlocks} $\forall k,l\in V: P[leader_k = l] = \frac{1}{n}$. Thus,
	$$\forall k\in\{i,j\}:\ P[leader_k\neq k] = \frac{n-1}{n}$$
	If $i$ does not elect itself as leader, then (based of the uniform input distribution) $D_i=v$ with probability $\frac{1}{2}$.
	Hence:
	
	\begin{align}
	P[D_i = v | X] &= P[leader_i\neq i | X]\cdot \frac{1}{2} + P[leader_i= i | X]\cdot P[D_i= v | X \land leader_i= i ]  \notag\\
	&\leq P[leader_i\neq i | X]\cdot \frac{1}{2} + \frac{1}{n} \notag\\
	&= \frac{n-1}{2n} + \frac{1}{n} \notag\\
	&= \frac{n+1}{2n} \notag
	\end{align}
	The same goes for agent $j$.
	Since the decision of $i$ is independent of the decision of $j$ (by solution preference, the coalition succeeds only if $D_i=D_j=v$):  
	$$\forall n>2: P[D_i=v \land D_j=v | X] = P[D_i=v | X] \cdot P[D_j=v | X] \leq {(\frac{n+1}{2n})}^2 < \frac{1}{2}$$
	Since the probability to decide $v$ when executing the protocol in Algorithm~\ref{proto_even}
	with no deviation is $\frac{1}{2}$,
	there is no incentive for $C$ to share different ids or random values with $i$
	than it shares with $j$ (and vice versa).
\end{proof}
\begin{lemma}
	\label{diff_input}
	$C$ has no incentive to share a set of input values with $i$ and a set with $j$, that disagree. 
\end{lemma}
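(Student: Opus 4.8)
The plan is to reduce, using the preceding lemmas, to a clean probabilistic computation and show that the coalition's probability of deciding its preferred value $v$ never exceeds $\frac12$, which by the observation that honest play yields a uniform decision (Observation A.0) equals the success probability under honest execution. First I would invoke Lemma~\ref{diff_leader} to assume that $i$ and $j$ elect a common leader $\ell$, since the leader-disagreement freedom is already handled there and the present lemma concerns only the remaining freedom of lying about inputs (to be examined now in the regime, left open after Lemma~\ref{2vsC}, where neither $i$ nor $j$ learns the true input of the other). I then split into three cases according to whether $\ell=i$, $\ell=j$, or $\ell\in C$, writing $D_i$ and $D_j$ as the XOR of the values each of $i$ and $j$ actually uses for the non-leader agents.

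The engine of the argument is an independence claim extracted from resilience: because the protocol runs an $(n-2)$-resilient knowledge-sharing primitive, agent $i$ never accepts new information from an $i$-knower, and a non-$i$-knower's state is by definition uncorrelated with $I_i$. Hence every value that $i$ uses for the other agents is independent of $i$'s own input $I_i$, and symmetrically for $j$. Consequently, whenever $\ell\neq i$ we may write $D_i = I_i \oplus \alpha$ with $\alpha$ independent of $I_i$ (though $\alpha$ may depend on $I_j$, which the coalition might know), and whenever $\ell\neq j$ we may write $D_j = I_j \oplus \beta$ with $\beta$ independent of $I_j$.

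With these normal forms the case analysis is short. If $\ell\in\{i,j\}$, say $\ell=i$, then $D_i$ is a function only of coalition-controlled messages while $D_j = I_j\oplus\beta$ with $\beta\perp I_j$; thus $P[D_j=v]=\frac12$ and $P[D_i=D_j=v]\le\frac12$. If $\ell\in C$, then $D_i=I_i\oplus\alpha$ and $D_j=I_j\oplus\beta$ with $I_i,I_j$ independent uniform bits, $\alpha\perp I_i$, and $\beta\perp I_j$; fixing the coalition's coins makes $\alpha$ a function of $I_j$ alone and $\beta$ a function of $I_i$ alone, so enumerating the four equally likely values of $(I_i,I_j)$ shows at most two of them satisfy both $I_i\oplus\alpha=v$ and $I_j\oplus\beta=v$ (each value of $I_j$ forces a unique $I_i$, leaving a fixed-point condition on $I_j$ with at most two solutions), giving $P[D_i=D_j=v]\le\frac12$. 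Note this bound holds for arbitrary coalition claims, so in particular for the disagreeing input sets named in the statement.

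Since honest execution decides $v$ with probability exactly $\frac12$, deviating to disagreeing input sets can never strictly increase the coalition's expected utility, and therefore $C$ has no incentive to do so. The step I expect to be the main obstacle is justifying the independence claim rigorously: the coalition is adaptive and may route information about $I_j$ into $\alpha$ and about $I_i$ into $\beta$, so one must argue carefully from the resilience guarantee that no such routing can make $\alpha$ depend on $I_i$ (resp. $\beta$ on $I_j$), and then verify that the residual coupling between $\alpha$ and $I_j$ (resp. $\beta$ and $I_i$) still leaves at most two favorable input pairs in the $\ell\in C$ case.
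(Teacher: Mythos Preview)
Your argument is correct and uses the same two ingredients as the paper (Lemma~\ref{diff_leader} to force a common leader, and resilience of knowledge sharing to get independence from the non-leader's own input), but you work harder than necessary. The paper does not case-split on the identity of the leader: since $i$ and $j$ elect the same leader, at least one of them---say $i$ without loss of generality---is not the leader, so $I_i$ appears in $i$'s XOR. Resilience then gives that the remaining summands in $D_i$ are independent of $I_i$, hence $P[D_i=v]=\tfrac12$ outright, and $P[D_i=D_j=v]\le P[D_i=v]=\tfrac12$ finishes the lemma in one line. Your cases $\ell=i$ and $\ell=j$ are exactly this argument applied to $j$ and $i$ respectively; your case $\ell\in C$ is where you do extra work, running a two-variable fixed-point count over $(I_i,I_j)$, when in fact the one-variable argument already applies (either of $D_i$ or $D_j$ is uniform on its own). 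Your route is sound and the independence justification you flag as the ``main obstacle'' is precisely what the paper invokes as well, but the $\ell\in C$ analysis can simply be absorbed into the WLOG.
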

\begin{proof}
	Assume $C$ has a preference towards $v$, and denote by $Y$ the case in which the coalition provides a set of input values with $i$ and a set with $j$, that disagree.  Like in the previous proof, the decision values of $i$ and $j$ are independent.
	By~\ref{diff_leader}, both agents $i$ and $j$ elect the same leader, hence that at least one of them is not elected. \emph{W.l.o.g}, $i$ is not the leader. When $i$ calculates the XOR (step $5$), $I_i$ is not omitted from the calculation. Since the set of inputs provided to $i$ is independent for $I_i$ (provided by knowledge-sharing being resilient), and since $C$ does not know in advance $I_i$, which is uniformly distributed, the result of the XOR is uniformly distributed. I.e.: $P[D_i = v | Y] = \frac{1}{2}$.
	Since the probability to reach consensus on $v$ when running Algorithm \ref{proto_even} with no deviation is $\frac{1}{2}$, there is no incentive for $C$ to share different input values with $i$, than it shares with $j$.
\end{proof}

\begin{proof}[Proof of Theorem~\ref{th_equil_even}]
	From lemmas ~\ref{2vsC}, ~\ref{diff_leader} and~\ref{diff_input}, we know that, in any run of the algorithm, both $i$ and $j$ obtain the same knowledge $K$. Since the decision value is uniformly distributed in a correct run, then for any legal knowledge sharing $K$: $P[D_i=0] = P[D_i=1] = \frac{1}{2}$.
	This means that $C$ has no incentive to choose in advance either a specific set of random values or input values or ids.
\end{proof}
%}

\section{Informative Silent Rounds and Informative Messages}
\label{section_formal_meaningful}
For this section, let $R$ be a run of a distributed XOR algorithm $A$ in network $G=(V,E)$.
\begin{definition}[Link experiences]\label{def:LinkExp}
	For any Agent $i\in V$ at any round $t$, for all $(i,j)\in E$ define the incoming link experience of $i$ to be:
	$$
	ILE(i,j,t) = 
	\begin{cases}
	\text{m} &\quad \text{ (j sends message m to i at round t)} \\
	\text{silence} &\quad \text{ (j does not send any message to i at round t)}
	\end{cases}
	$$
	Similarly, define the outgoing link experience of $i$ with $j$ at round $t$ to be:
	$$
	OLE(i,j,t) = 
	\begin{cases}
	\text{m} &\quad \text{ (i sends message m to j at round t)} \\
	\text{silence} &\quad \text{ (i does not send any message to j at round t)}
	\end{cases}
	$$
\end{definition}

\begin{definition}[Round of an agent]\label{def:RoundOfAnAgent}
	For $i \in V$ at round $t$:
	\begin{itemize} 
		\item $I_i$     := Agent $i$'s input.
		\item $in(i,t)$ := All incoming link experiences $i$ has with its neighbors at round $t$.
		\item $out(i,t)$ := All outgoing link experiences $i$ has with its neighbors at round $t$. 
		\item $D(i,t) \in \{0,1,?\}$ := The decision value of $i$. As long as $t$ is not the final round, $D(i,t)=?$ 
		\item $round(i,t) = <I_i,in(i,t),out(i,t),D(i,t)>$ := Round $t$ from agent $i$'s perspective
	\end{itemize}
\end{definition}

\begin{definition}[Run of an agent]\label{def:RunOfAnAgent}
	For $i \in V$, define $R(i)$ to be the projection of $R$ on $i$:
	$$R(i) := <round(i,0),round(i,1),... round(i,T_{end})>$$
\end{definition}

\begin{definition}[Prefix and suffix of a run]
	For $i \in V$:
	$$R(i)^{0...t} := <round(i,0),round(i,1),... round(i,t)>$$
	I.e. the prefix of $R(i)$ up to round $t$.
	Each prefix of a run has a set of possible legal suffixes of the form:
	$$S(i)^{t+1...} := <round(i,t+1),round(i,t+2),...>$$
\end{definition}

\begin{definition}[Informative link experience]
	Intuitively, informative link experiences are $ILE$ after which $i$'s execution may be altered. 
	Let $i,j\in V$. Denote $e_1$  to be a legal $ILE$ that $i$ has at round $t$ of $R$ with $j$. $e_1$ is informative if there exists:
	\begin{itemize}
		\item $e_2$ := Another $ILE$ that $i$ has with $j$ at round $t$ ($e_1\neq e_2$) 
		\item $out$ := A set of outgoing link experiences $i$ had with its neighbors at round $t$
		\item $in$  := A set of incoming link experiences $i$ had with its neighbors at round $t$ \emph{not including $j$}.
		\item $D$   := A decision value.
	\end{itemize}
	Such that the following holds:
	\begin{enumerate}
		\item Both $<I_i,\ in \bigcup \{e_1\},\ out,\ D>$ and $<I_i,\ in \bigcup \{e_2\},\ out,\ D>$ are legal rounds for agent $i$ in a run of $A$ with prefix $R(i)^{0...t-1}$
		\item $\exists$ $S(i)^{t+1...}$ - a suffix of $i$'s run, such that:
		\begin{equation*}
		\begin{aligned}
			P[S(i)^{t+1...}\ |\ R(i)^{0...t-1} \land <I_i,\ in \bigcup \{e_1\},\ out,\ D>] \neq \\ P[S(i)^{t+1...}\ |\ R(i)^{0...t-1} \land <I_i,\ in \bigcup \{e_2\},\ out,\ D>]
		\end{aligned}
		\end{equation*}
	\end{enumerate} 
\end{definition}

\begin{definition}[Informative silent round]
	In Subsection~\ref{section_cons_2_IKS}, an informative silent round is actually an incoming link experience $i$ has with $j$ at round $t$, such that:
	\begin{enumerate}
		\item $ILE(i,j,t) =$ silence
		\item $ILE(i,j,t)$ is informative
	\end{enumerate}
\end{definition}

\section{Difficulties in extending Theorem \ref{thm:consImpliesRKS} to Non Deterministic Case}
\label{section_non_det}
\begin{figure}[H]
	\captionsetup{justification=centering}
	\includegraphics[scale=0.70]{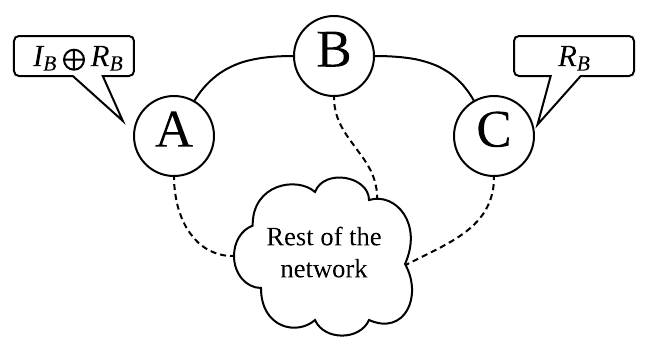}
	\centering
	\caption{A snippet of agents $A$ and $C$'s knowledge regarding $I_B$
		in a non-deterministic XOR computing algorithm. $R_B$ is a random number chosen by $B$.}
	\label{fig_non_det_problem}
\end{figure}
Figure~\ref{fig_non_det_problem} depicts a counter example in a non-deterministic algorithm to the construction in Theorem~\ref{thm:consImpliesRKS}. $A$ and $C$ cannot make a good guess regarding $I_B$ on their own. If however, they were able to combine the information they have acquired, they would become $B$-knowers.
In the original algorithm, $B$ can still receive (send) messages from (to) $A$ and $C$ (they are not $B$-knowers).
Applying the construction in Theorem~\ref{thm:consImpliesRKS} on this non-deterministic algorithm, agent $A$ would have been able to pass $C$ its array of messages, and $B$ would have to let it pass through, thus creating an $A$-$C$ 'shortcut' through $B$.

\end{document}